\documentclass[a4paper,11pt]{article}
\usepackage{a4wide}
\usepackage{amsmath}
\usepackage{amsfonts}
\usepackage{amssymb}
\usepackage{amsthm}
\usepackage[utf8]{inputenc}
\usepackage{cellspace}
\usepackage[small,sc]{caption}
\usepackage{booktabs, multicol, multirow}
\usepackage{enumitem}

\theoremstyle{plain}
\newtheorem{theorem}{Theorem}
\newtheorem{restate}{Theorem}
\newtheorem{lemma}{Lemma}

\theoremstyle{definition}
\newtheorem{definition}{Definition}

\newcommand{\F}{\mathbb{F}}
\newcommand{\R}{\mathbb{R}}
\newcommand{\A}{\mathcal{A}}

\newcommand{\colvec}[1]{\begin{bmatrix}#1\end{bmatrix}} 
\newcommand{\mat}[1]{\mathbf{#1}}
\newcommand{\vect}[1]{\boldsymbol{#1}}
\DeclareMathOperator{\poly}{poly}
\DeclareMathOperator{\diag}{diag}

\setlist[1]{itemsep=-4pt}

\begin{document}

\title{Generating $k$-independent variables in constant time}

\author{
	Tobias Christiani\\
	\small \texttt{tobc@itu.dk}\\
	\small IT University of Copenhagen
	\and
	Rasmus Pagh\\
	\small \texttt{pagh@itu.dk}\\
	\small IT University of Copenhagen
\thanks{The research leading to these results has received funding from the European Research Council under the European Union's Seventh Framework Programme (FP7/2007-2013) / ERC grant agreement no. [614331].}
}

\renewcommand\footnotemark{}

\date{\vspace{-5ex}}

\maketitle

\begin{abstract}
The generation of pseudorandom elements over finite fields is fundamental to the time, space and randomness complexity of randomized algorithms and data structures. 
We consider the problem of generating $k$-independent random values over a finite field $\F$ in a word RAM model equipped with constant time addition and multiplication in $\F$, and present the first nontrivial construction of a generator that outputs each value in \emph{constant time}, not dependent on~$k$.
Our generator has period length $|\F|\poly \log k$ and uses $k \poly(\log k) \log |\F|$ bits of space, which is optimal up to a $\poly \log k$ factor.
We are able to bypass Siegel's lower bound on the time-space tradeoff for \mbox{$k$-independent} functions by a restriction to sequential evaluation.  
\end{abstract}

\section{Introduction}
Pseudorandom generators transform a short random seed into a longer output sequence. 
The output sequence has the property that it is indistinguishable from a truly random sequence by algorithms with limited computational resources. 
Pseudorandom generators can be classified according to the algorithms (distinguishers) that they are able to fool.
An algorithm from a class of algorithms that is fooled by a generator can have its randomness replaced by the output of the generator,
while maintaining the performance guarantees from the analysis based on the assumption of full randomness.
When truly random bits are costly to generate or supplying them in advance requires too much space, a pseudorandom generator can reduce the time, space and randomness complexity of an algorithm.

This paper presents an explicit construction of a pseudorandom generator that outputs a $k$-independent sequence of values in \emph{constant time} per value, not dependent on $k$, on a word RAM~\cite{hagerup1998}. 
The generator works over an arbitrary finite field that allows constant time addition and multiplication over $\F$ on the word RAM.

Previously, the most efficient methods for generating $k$-independent sequences were either based on multipoint evaluation of degree $k-1$ polynomials, or on direct evaluation of constant time hash functions.
Multipoint evaluation has a time complexity of $O(\log^{2} k \log \log k)$ field operations per value while hash functions with constant evaluation time use excessive space for non-constant $k$ by Siegel's lower bound~\cite{siegel2004}.
We are able to get the best of both worlds: constant time generation and near-optimal seed length and space usage. 

\paragraph{Significance.}
In the analysis of randomized algorithms and in the hashing literature in particular, $k$-independence has been the dominant framework for limited randomness. 
Sums of $k$-independent variables have their $j$th moment identical to fully random variables for $j \leq k$ which preserves many properties of full randomness.
For output length $n$, $\Theta(\log n)$-independence yields Chernoff-Hoeffding bounds~\cite{schmidt1995} and random graph properties~\cite{alon2008},
while $\Theta(\poly \log n)$-independence suffices to fool $AC^{0}$ circuits~\cite{braverman2010}.

Our generator is particularly well suited for randomized algorithms with time complexity $O(n)$ that use a sequence of $k$-independent variables of length $n$, for non-constant $k$.
For such algorithms, the generation of $k$-independent variables in constant time by evaluating a hash function over its domain requires space $O(n^{\epsilon})$ for some constant $\epsilon > 0$. 
In contrast, our generator uses space $O(k \poly \log k)$ to support constant time generation. 
Algorithms for randomized load balancing such as the simple process of randomly throwing $n$ balls into $n$ bins fit the above description and presents an application of our generator.
Using the bounds by Schmidt et al.~\mbox{\cite[Theorem 2]{schmidt1995}} it is easy to show that $\Theta(\log n / \log \log n)$-independence suffices to obtain a maximal load of any bin of $O(\log n / \log \log n)$ with high probability.
This guarantee on the maximal load is asymptotically the same as under full randomness. 
Using our generator, we can allocate each ball in constant time using space $O(\log n \poly \log \log n)$ compared to the lower bound of $O(n^{\epsilon})$ of hashing-based approaches to generating $k$-independence. 
In Section \ref{sec:loadbalancing} we show how our generator improves upon existing solutions to a dynamic load balancing problem.

The generation of pseudorandomness for Monte Carlo experiments presents another application.
Limited independence between Monte Carlo experiments can be shown to yield Chernoff-like bounds on the deviation of an estimator from its expected value. 
Consider a randomized algorithm~$\A(Y)$ that takes $m$ random elements from $\F$ encoded as a string $Y$ and returns a value in the interval $[0,1]$.    
Let $\mu_{\A} > 0$ denote the expectation of the value returned by~$\A(Y)$ under the assumption that $Y$ encodes a truly random input.
Define the estimator
\begin{equation}
\hat{\mu}_{\A} = \frac{1}{t}\sum_{i=1}^{t}\A(Y_{i}).
\end{equation}
Due to a result by Schmidt et al. \cite[Theorem 5]{schmidt1995}, for every choice of constants $\epsilon, \alpha > 0$,
it suffices that $Y_{1}, \dots, Y_{t}$ encodes a sequence of $\Theta(m \log t)$-independent variables over $\F$ to yield the following high probability bound on the deviation of $\hat{\mu}_{\A}$ from $\mu_{\A}$.
\begin{equation}
Pr[|\hat{\mu}_{\A} - \mu_{\A}| \geq \epsilon\mu_{\A}] \leq O(t^{-\alpha}). 
\end{equation}
We hope that our generator can be a useful tool to replace heuristic methods for generating pseudorandomness in applications where theoretical guarantees are important. 
In order to demonstrate the practicality of our techniques, we present experimental results on a variant of our generator in Section \ref{sec:experiments}. 
Our experiments show that $k$-independent values can be generated nearly as fast as output from heuristic pseudorandom generators, even for large $k$.

\paragraph{Methods.}
Our construction is a surprisingly simple combination of bipartite unique neigbor expanders with multipoint polynomial evaluation.
The basic, probabilistic construction of our generator proceeds in two steps: 
First we use multipoint evaluation to fill a table with \mbox{$\Theta(k)$-independent} values from a finite field, using an average of $\poly \log k$ operations per table entry.
Next we apply a bipartite unique neighbor expander with constant outdegree and with right side nodes corresponding to entries in the table and a left side that is $\poly \log k$ times larger than the right side.
For each node in the left side of the expander we generate a $k$-independent value by returning the sum of its neighboring table entries.
Our main result stated in Theorem 1 uses the same idea, but instead of relying on a single randomly constructed expander graph, 
we employ a cascade of explicit constant degree expanders and show that this is sufficient for constant time generation.   

\paragraph{Relation to the literature.}
Though the necessary ingredients have been known for around 10 years, we believe that a constant time generator has evaded discovery by residing in a blind spot between the fields of hashing and pseudorandom generators. 
The construction of constant time $k$-independent \emph{hash functions} has proven to be a difficult task, and a fundamental result by Siegel~\cite{siegel2004} showed a time-space tradeoff that require hashing-based generators with sequence length $n$ to use $O(n^{\epsilon})$ space for some constant $\epsilon > 0$. 
On the other hand, from the point of view of pseudorandom generators, a generator of $k$-independent variables, for non-constant~$k$, can not be used as an efficient method of derandomization: 
A lower bound by Chor et al.~\cite{chor1985} shows that the sample space of such generators must be superpolynomial in their output length. 
Consequently, research shifted towards generators that produce other types of outputs such as biased sequences or almost $k$-independent variables \cite{alon1992, naor1993, goldreich2010}. 

It is relevant to ask whether there already exist constructions of constant time pseudorandom generators on the word RAM that can be used instead of generators that output $k$-independent variables. 
For example, Nisan's pseudorandom generator~\cite{nisan1992} uses constant time to generate a pseudorandom word and has remarkably strong properties: 
Every algorithm running in $\textsc{space}(s)$ that uses $n$ random words can have its random input replaced by the output of a constant time generator with seed length $O(s \log n)$. 
The probability that the outcome of the algorithm differs when using pseudorandomness as opposed to statistical randomness is decreasing exponentially in the seed length. 

In spite of this strong result, there are many natural applications where the restrictions on Nisan's model means that we cannot use his generator directly to replace the use of a $k$-generator. 
An example is the analysis that uses a union bound over all subsets of $k$ words of a randomly generated structure described by $n$ words.
Algorithms shown to be derandomized by Nisan's generator are restricted to one-way access to the output of the generator. 
Therefore the output of Nisan's generator can not be used to derandomize an algorithm that tests for the events of the union bound without using excessive space.
In this case, \mbox{$k$-independence} can directly replace the use of full randomness without changing the analysis.
\subsection{Our contribution}
We present three improved constructions of \emph{$k$-generators}, formally defined in Section~\ref{sec:preliminaries}, that are able to generate a sequence of $k$-independent values over a finite field $\F$.
Our results are stated in a word RAM model equipped with constant time addition and multiplication in $\F$.     
Our main result is a fully explicit generator:
\begin{theorem}\label{thm:explicit}
For every finite field $\F$ with constant time arithmetic there exists a data structure that for every choice of $k \leq |\F| /\! \poly \log |\F|$ is an explicit constant time $k$-generator.
The generator has range $\F$, period $|\F|\poly \log k$, and seed length, space usage and initialization time $k \poly \log k$.
\end{theorem}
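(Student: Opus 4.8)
The plan is to realize, in a fully explicit and local form, the two-step strategy outlined under \textbf{Methods}. First I would draw from the seed a polynomial $p$ over $\F$ of degree $dk-1$, where $d=O(1)$ is a degree to be fixed below, so that the sequence $s_i=p(x_i)$ over the $|\F|$ points $x_i\in\F$ is $(dk)$-independent; this is the classical $k$-wise independent sequence, and by the stated multipoint-evaluation bound a batch of $\Theta(k)$ consecutive values $s_i$ can be produced in $O(k\poly\log k)$ field operations, i.e.\ $\poly\log k$ per value. The $j$-th output of the generator would then be $o_j=\sum_{i\in\Gamma(j)}s_i$, where $\Gamma(j)\subseteq\{1,\dots,|\F|\}$ is the neighborhood of left-vertex $j$ in a bipartite graph $G$ of constant left-degree $d$, whose left side has size $N=|\F|\poly\log k$ and whose right side is indexed by the $s_i$.

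The correctness rests on one lemma: if $G$ has the unique-neighbor property for every left-set of size at most $k$, and the right-hand values are $(dk)$-independent, then $(o_j)$ is $k$-independent. I would prove this by peeling. Fix any $k$ output indices $S$; the corresponding outputs are $\F$-linear combinations of the at most $dk$ master values they touch, and those values are jointly uniform by $(dk)$-independence. Applying the unique-neighbor property to the nested subsets $S\supseteq S_1\supseteq\dots$ yields, at each step, a right vertex adjacent to exactly one not-yet-eliminated index, which exhibits the $k\times(dk)$ coefficient matrix as having full row rank; hence the $k$ outputs are uniform on $\F^k$. The choice of degree $dk-1$ is exactly what makes the independence hypothesis hold.

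The main obstacle is to obtain $G$ \emph{explicitly}: with \emph{constant} degree $d$, with the unique-neighbor property for all $k$-sets at a right-to-left blow-up of only $N/M=\poly\log k$, and with a neighbor map that is both $O(1)$-time computable and \emph{banded}, meaning $\Gamma(j)$ lies in a window of $O(k\poly\log k)$ consecutive master indices that advances monotonically in $j$. In place of a single random graph I would compose a cascade of explicit constant-degree unique-neighbor (e.g.\ lossless) expanders, keeping the overall degree a constant product while the individual blow-ups multiply to $\poly\log k$; the unique-neighbor property for sets up to size $k$ survives the composition because the right side stays $\Omega(dk)$, which holds as $M=\Theta(|\F|)$ and $k\le|\F|/\poly\log|\F|$. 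Verifying that constant degree, the $k$-set guarantee, and bandedness can be maintained simultaneously through the cascade is the crux of the argument.

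Given such a $G$, constant-time sequential generation follows from a windowing scheme, and it is precisely the restriction to sequential (rather than random) access that lets us sidestep Siegel's bound. I would process the output in blocks of length $B=\Theta(k\poly\log k)$ and, for each block, fill a table with the $\Theta(k)$ master values in its window using one multipoint evaluation of cost $O(k\poly\log k)$; amortized over the $B$ outputs this is $O(1)$ per value, and each $o_j$ is then a sum of $d=O(1)$ table lookups. Carrying out the evaluation for the next block concurrently with the current one deamortizes this to worst-case $O(1)$ time. Finally I would tally the parameters: the master sequence has length $\Theta(|\F|)$, so the period is $N=|\F|\poly\log k$; the resident state is one window of master values together with the coefficients of $p$, giving seed length, space and initialization time $k\poly\log k$; and since $G$ is explicit the construction is an explicit constant-time $k$-generator, as required.
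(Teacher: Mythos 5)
Your outline matches the paper's high-level strategy (a multipoint-evaluated master table combined with unbalanced unique-neighbor expanders, with Siegel-style peeling as the key lemma --- your peeling argument is exactly Lemma~\ref{lem:expanderhashing}), but the step you yourself flag as the crux is where the proposal breaks. You require a single flattened graph $G$ that simultaneously has \emph{constant} left-degree $d$, imbalance $N/M=\poly\log k$, the $k$-unique property, and an explicit description. The only explicit unique-neighbor expanders available (Lemma~\ref{lem:explicit}) are explicit only for \emph{constant} imbalance $c$, so any cascade achieving total blow-up $\poly\log k$ needs $t=\Omega(\log\log k)$ levels; but composition multiplies degrees just as it multiplies blow-ups, so the flattened graph has degree $d^{t}=(\log k)^{\Omega(1)}$, not $O(1)$. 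Asking for a product of $t$ constant factors that stays $O(1)$ while another product of $t$ constant factors grows to $\poly\log k$ is impossible. Two further consequences: your master polynomial of degree $dk-1$ with $d=O(1)$ is too weak, since each level of the cascade consumes a factor $d$ of independence (to end with $k$-independence at the top, level $i$ must be $d^{t-i}k$-independent, so the bottom table must be $d^{t}k=k\poly\log k$-independent --- this is precisely why the theorem promises seed length $k\poly\log k$ rather than $O(k)$); and your claim that the unique-neighbor property ``survives the composition'' is unsupported --- unique-neighborness of a composed graph does not follow from unique-neighborness of the factors, and the paper never needs such a statement.

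The paper's resolution is exactly \emph{not} to flatten. It keeps the cascade as a cascade: for each batch it tabulates the outputs of the intermediate functions $g_{1},g_{2},\dots,g_{t}$ level by level, and it propagates \emph{independence} (applying Lemma~\ref{lem:expanderhashing} once per level) rather than any property of a composed graph. A single final output then depends on $d^{t}=\poly\log k$ master values, but the work is shared across outputs through the intermediate tables: producing the level-$i$ table for a batch costs $O(dc^{i}m)$ operations, so the total is $\sum_{i=1}^{t}dc^{i}m=O(dc^{t}m)$, i.e.\ $O(d)$ amortized per output by a geometric series --- constant time with no constant-degree flattened graph anywhere. Your bandedness concern is handled by the block-diagonal stacking of Lemma~\ref{lem:stacking}, and your deamortization remark survives unchanged. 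If you replace the flattened graph by this level-by-level tabulation, take $t=O(\log\log k)$, and raise the bottom-level independence to $d^{t}k$, your argument becomes the paper's proof.
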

We further investigate how the space usage and seed length may be reduced by employing a probabilistic construction that has a certain probability of error:
\begin{theorem} \label{thm:existence}
For every finite field $\F$ with constant time arithmetic and every choice of positive constants $\varepsilon$, $\delta$ there exists a data structure 
that for every choice of $k = O(|\F|)$ is a constant time $k$-generator with failure probability $\delta$,
range~$\F$, period $|\F|$, seed length $O(k)$, space usage $O(k \log^{2+\varepsilon}k)$, and initialization time $O(k \poly \log k)$.
\end{theorem}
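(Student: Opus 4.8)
The plan is to realize the generator as $\mathrm{out}(i) = \sum_{v \in N(i)} T[v]$, where $T$ is a precomputed table of $\Theta(k)$-independent field elements and $N(i)$ is the constant-size neighborhood of the $i$-th left vertex in a suitable bipartite graph. The correctness core is a peeling lemma: if the graph has the \emph{unique-neighbor} property for every left-subset of size at most $k$, and the table is $ck$-independent for the constant out-degree $c$, then the outputs are $k$-independent. I would prove this by fixing an arbitrary set $S$ of at most $k$ output positions, repeatedly extracting a vertex of $S$ that owns a private table cell (guaranteed by the unique-neighbor property applied to the shrinking set), and observing that, after fixing all other involved table cells, the map from the private (pivot) cells to $(\mathrm{out}(i))_{i\in S}$ is an affine bijection of $\F^{|S|}$; since the $\le ck$ involved cells are jointly uniform, so are the outputs.

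For the table itself I would use multipoint evaluation: the seed is the $\Theta(k)$ coefficients of a single degree-$\Theta(k)$ polynomial $P$ over $\F$ (giving seed length $O(k)$ and $\Theta(k)$-independence of its distinct-point evaluations), and I would fill the $M = \Theta(k \log^{2+\varepsilon} k)$ entries $T[v] = P(x_v)$ by batched fast multipoint evaluation at $M$ distinct points. Splitting the $M$ points into $M/\Theta(k)$ blocks of size $\Theta(k)$ and evaluating each block in $O(k \poly\log k)$ time gives total initialization $O(k \poly\log k)$ and space $O(M) = O(k \log^{2+\varepsilon} k)$, as required.

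The probabilistic ingredient, and the source of the space and seed savings over Theorem~\ref{thm:explicit}, is that a \emph{random} constant-degree bipartite graph is a unique-neighbor expander for all subsets of size at most $k$ with probability at least $1-\delta$. I would prove this by a first-moment/union-bound argument: a fixed left-set of size $s$ fails to have vertex expansion above $c/2$ (which forces a unique neighbor by edge counting) only if its $cs$ random edge-endpoints collide into few right cells, an event of probability at most $(e c s / 2M)^{cs/2}$; summing $\binom{N}{s}$ times this over $s \le k$ stays below $\delta$ once the out-degree $c = c(\varepsilon,\delta)$ and $M = \Theta(k \log^{2+\varepsilon} k)$ are chosen appropriately. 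Crucially, each bad event depends on only $cs \le ck$ edges, so it suffices that the edge choices be $\Theta(k)$-independent rather than fully random, which is exactly what a second application of the polynomial/multipoint machinery supplies within the $O(k)$ seed budget.

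The hard part, and the step I expect to be the main obstacle, is simultaneously obtaining period $|\F|$ and $O(1)$ time per value. A single random expander over a left side of size $|\F|$ would force $M = \Omega(|\F|^{2/c} k^{1-2/c})$, which exceeds $O(k \poly\log k)$ whenever $k \ll |\F|$; so the left side must be grown to $|\F|$ through a cascade of expanders whose relevant set sizes stay $O(k)$ at every layer, while the total number of table probes per value stays bounded. Worse, the neighbor assignment over the domain $\F$ cannot be stored in $O(k \poly\log k)$ space, so the $\poly\log k$ amortized cost of multipoint evaluation (for both the table and the edge labels) must be driven down to $O(1)$ per value. I would handle this via the sequential-evaluation restriction emphasized in the abstract: generate the sequence in order, refilling fixed-size buffers of table- and neighbor-values by batched multipoint evaluation and charging each batch against the many values it serves, so that the construction bypasses Siegel's lower bound precisely because one-shot random access is never required. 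Verifying that this amortization is genuinely $O(1)$ and that the $\Theta(k)$-wise independence survives the cascade is where I expect the real work to lie.
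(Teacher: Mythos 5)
Your core machinery is the same as the paper's: outputs are constant-size neighborhood sums over a table of $\Theta(k)$-independent values (Siegel's unique-neighbor argument, Lemma~\ref{lem:expanderhashing}), the table comes from multipoint evaluation of a single degree-$\Theta(k)$ polynomial, the random constant-degree unbalanced graph is analyzed by exactly the union bound you describe, and the observation that both the polynomial and the edge randomness need only $O(k)$-independence gives seed length $O(k)$. The genuine gap is the step you yourself flag as unresolved: getting period $|\F|$ in constant time and small space. The paper does \emph{not} grow the left side to $|\F|$ by a cascade of random expanders; it invokes the stacking lemma (Lemma~\ref{lem:stacking}): if one $(c,m,d,k)$-unique graph exists, the block-diagonal graph consisting of $b$ disjoint copies of it is automatically $(c,bm,d,k)$-unique, with no further union bound, because any bad set of at most $k$ left vertices meets some block in a nonempty set of at most $k$ vertices, which already has a unique neighbor inside that block's disjoint right side. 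So one small random graph, drawn once and tabulated at initialization in $O(k \poly\log k)$ time and $k\poly\log k$ words, is reused for every block; block $j$ of outputs depends only on block $j$ of the table (the next $m$ entries of the $(|\F|,dk)$-sequence), which is precisely the locality that sequential evaluation needs. Your cascade cannot be repaired into this: a truly random top layer with left side $|\F|$ must either store $\Omega(|\F|)$ edge labels (killing space) or regenerate labels per batch by multipoint evaluation --- but amortization via imbalance only works for the \emph{table} (each table cell serves $cd$ outputs on average), not for edge labels, since every output consumes $d$ fresh label reads; regenerating them costs $\poly\log k$ per output, not $O(1)$. Restoring locality by making the layers block-diagonal is exactly the stacking idea, at which point the cascade is superfluous: a single layer with imbalance $\poly\log k$, which random graphs sustain at constant degree (this is the whole point of Section~\ref{sec:probabilistic} versus Theorem~\ref{thm:explicit}), already amortizes the multipoint cost.

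There is also a secondary parameter error that breaks your space bound even within one block. You put the $\log^{2+\varepsilon}k$ factor into the \emph{right} side, $M=\Theta(k\log^{2+\varepsilon}k)$, with constant degree. Refilling that table costs $\Theta(M\log^{2}k\log\log k)$ operations, so constant amortized time forces each left block to have $\Omega(k\log^{4+\varepsilon}k)$ vertices, and storing even one block's edge table then takes $\Omega(k\log^{4+\varepsilon}k)$ words, exceeding the claimed $O(k\log^{2+\varepsilon}k)$. The paper splits the parameters the other way: the right side is barely superlinear, $m=O(k\log^{\tilde\varepsilon}k)$ with $\tilde\varepsilon<\varepsilon$, and the remaining polylogarithmic factor goes into the imbalance $c$, which equation~\eqref{eq:expanderparameters} shows a random graph supports at a constant degree $d$ depending on $\tilde\varepsilon$; then the dominant space cost, the stored graph of $cmd$ words, lands at $O(k\log^{2+\varepsilon}k)$ while $c\ge\log^{2}k\log\log k$ still amortizes the table refills to $O(1)$ per output.
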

Finally, we improve existing $k$-generators with optimal space complexity:
\begin{theorem} \label{thm:fastmultipoint}
For every finite field $\F$ that supports computing the discrete Fourier transform of length $k$ in $O(k \log k)$ operations, 
there exists a data structure that, for every choice of $k \leq |\F|$ and given a primitive element $\omega$, 
is an explicit $O(\log k)$~time $k$-generator with range $\F$, period $|\F|$, seed length $k$, space usage $O(k)$, and initialization time $O(k \log k)$.
\end{theorem}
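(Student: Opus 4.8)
The plan is to instantiate the classical construction of $k$-independent variables by evaluating a uniformly random polynomial of degree $k-1$, but to organize the evaluation so that an entire period is produced by a sequence of length-$k$ discrete Fourier transforms. The seed is the coefficient vector $(a_0,\dots,a_{k-1})\in\F^k$, interpreted as $p(x)=\sum_{m=0}^{k-1}a_m x^m$. Since the Vandermonde matrix on any $k$ distinct points is invertible, the evaluations of $p$ at $k$ distinct points of $\F$ are jointly uniform over $\F^k$; hence any enumeration of $\F$ without repetition yields a $k$-independent output sequence, and repeating the enumeration gives period $|\F|$. First I would fix this enumeration: emit $p(0)=a_0$ once, then run through the nonzero elements of $\F$ as powers of $\omega$.

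Next I would exploit the subgroup structure of $\F^*$ to turn the multipoint evaluation into repeated DFTs. Writing $n=|\F|-1$ and assuming (as is implicit in the hypothesis that a length-$k$ DFT is available) that $k\mid n$, the element $\zeta=\omega^{n/k}$ is a primitive $k$-th root of unity and $H=\{1,\zeta,\dots,\zeta^{k-1}\}$ is a subgroup of $\F^*$ of order $k$. Its cosets $\omega^s H$, for $s=0,\dots,n/k-1$, partition $\F^*$. Evaluating $p$ on the single coset $\omega^s H$ means computing $p(\omega^s\zeta^r)=\sum_{m=0}^{k-1}(a_m\omega^{sm})\,\zeta^{rm}$ for $r=0,\dots,k-1$, which is exactly the length-$k$ DFT of the scaled coefficient vector $c^{(s)}=(a_m\omega^{sm})_m$. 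By assumption this costs $O(k\log k)$ operations, i.e.\ amortized $O(\log k)$ per output value. Crucially, passing from coset $s$ to coset $s+1$ merely rescales the coefficients, $c^{(s+1)}_m=c^{(s)}_m\cdot\omega^m$, so the input to the next DFT is maintained in $O(k)$ time per coset, amortized $O(1)$ per value. Storing the current coset's evaluations together with the coefficient vector uses $O(k)$ space, the first DFT is the initialization at cost $O(k\log k)$, and cycling through $n/k$ cosets of $k$ points plus the single point $0$ gives period $|\F|$.

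The main obstacle is to upgrade the amortized $O(\log k)$ bound to a worst-case per-value bound, since one DFT costs $O(k\log k)$ but feeds $k$ consecutive outputs. I would de-amortize by double buffering: while the $k$ precomputed values of the current coset are emitted one per step, each step performs a $1/k$ fraction of the work needed to prepare the next coset. Concretely, over the $k$ steps of the current coset I interleave the $O(k)$ rescaling $c_m\mapsto c_m\omega^m$ (a constant number of multiplications per step) with the $O(k\log k)$ butterfly operations of the next coset's DFT (a constant times $\log k$ butterflies per step). Running the FFT in its iterative, in-place form lets me pause and resume it at butterfly granularity, so spreading its $O(k\log k)$ operations over $k$ steps yields worst-case $O(\log k)$ work per emitted value while keeping total space $O(k)$; when the current buffer is exhausted the two buffers are swapped. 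Assembling these pieces — $k$-independence from the random polynomial, coset-wise DFT evaluation with multiplicative coefficient updates, and the double-buffered de-amortization — gives the claimed explicit $O(\log k)$-time $k$-generator with range $\F$, seed length $k$, space $O(k)$, initialization time $O(k\log k)$, and period $|\F|$.
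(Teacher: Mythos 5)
Your proposal follows essentially the same route as the paper: partition $\F^{*}$ into the cosets $\omega^{j}S_{k,0}^{*}$ of the order-$k$ subgroup generated by $\omega_{k}=\omega^{(|\F|-1)/k}$, evaluate the random degree-$(k-1)$ polynomial on each coset as a length-$k$ DFT of the coefficient vector rescaled by powers of $\omega^{j}$, and maintain that rescaling in $O(k)$ operations per coset. The only addition is your explicit double-buffered de-amortization of the FFT work (and the separate handling of the point $0$), details the paper's proof leaves implicit at the amortized level — welcome extra care, but not a different approach.
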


Table~\ref{tab:results} summarizes our results along with previous methods of generating sequences of $k$-independent values over $\F$. 
All the methods output sequences that have a length of at least~$|\F|$.

\begin{table}[htpb]
  \centering
  \small
    \begin{tabular}{lllll}
    \toprule
	{\bf Construction}                        & {\bf Time}               & {\bf Space}                     & {\bf Seed length}    & {\bf Comment}  \\ \midrule
    Polynomials \cite{joffe1974,wegman1981}   & $O(k)$                   & $O(k)$                          & $k$                  & \\
	Multipoint \cite{gathen2013}              & $O(\log^2 k \log\log k)$ & $O(k \log k)$                   & $k$			      & \\
	Multipoint \cite{bostan2005}              & $O(\log k \log\log k)$   & $O(k)$                          & $k$		          & Requires $\omega$. \\ 
	Siegel \cite{siegel2004}                  & $O(1)$                   & $O(|\F|^{\varepsilon})$         & $O(k)$               & Probabilistic. \\
	Theorem \ref{thm:explicit}                & $O(1)$                   & $k \poly \log k$                & $k \poly  \log k$    & Explicit.  \\
	Theorem \ref{thm:existence}               & $O(1)$                   & $O(k \log^{2 + \varepsilon}k)$  & $O(k)$               & Probabilistic. \\
	Theorem \ref{thm:fastmultipoint}          & $O(\log k)$              & $O(k)$                          & $k$
    & Requires $\omega_{k}$, FFT. \\
   \bottomrule
    \end{tabular}
\caption{
Overview of generators that produce a $k$-independent sequence over a finite field $\F$. 
We use $\varepsilon$ to denote an arbitrary positive constant and $\omega$ and $\omega_{k}$ to denote, respectively, a primitive element and a $k$-th root of unity of $\F$.
The unit for space and seed length is the number of elements of $\F$ that need to be stored, i.e., a factor $\log_2 |\F|$ from the number of bits. 
Probabilistic constructions rely on random generation of objects for which no explicit construction is known, and may fail with some probability.
}
\label{tab:results}
\end{table}

\paragraph{Overview of paper}
In Section \ref{sec:preliminaries} we define \mbox{$k$-generators} and related concepts and review results that lead up to our main results.
Section \ref{sec:explicit} presents the details of our explicit construction of constant time generators. 
In Section \ref{sec:probabilistic} we apply the same techniques with a probabilistic expander construction to obtain generators with improved space and randomness complexity. 
Section \ref{sec:faster} presents an algorithm for evaluating a polynomial over all elements of $\F$ that improves existing generators with optimal space.
Section \ref{sec:wordRAM} shows how arithmetic over $\F_{p}$ can be implemented in constant time on a standard word RAM with integer multiplication and also reviews algorithms and the state of hardware support for $\F_{2^{w}}$.
Section \ref{sec:loadbalancing} applies our generator to improve the time-space tradeoff of previous solutions to a load balancing problem.
Section \ref{sec:experiments} presents experimental results on the generation time of different $k$-generators for a range of values of $k$. 
\section{Preliminaries} \label{sec:preliminaries}
We begin by defining two fundamental concepts:
\begin{definition}
A sequence $(X_{1}, X_{2}, \dots, X_{n})$ of $n$ random variables with finite range $R$ is an \emph{$(n,k)$-sequence} if the variables at every set of $k$ positions in the sequence are independent and uniformly distributed over $R$. 
\end{definition}
\begin{definition}
A family of functions ${\mathcal{F} \subseteq \{f \mid f \colon U \to R \}}$ is \emph{$k$-independent} if for every set of $k$ distinct inputs $x_{1}, x_{2}, \dots, x_{k}$ 
it holds that $f(x_{1}), f(x_{2}), \dots, f(x_{k})$ are independent and uniformly distributed over $R$ when $f$ is selected uniformly at random from $\mathcal{F}$. 
We say that a function $f$ selected uniformly at random from $\mathcal{F}$ is a \emph{$k$-independent function}.
\end{definition}
We now give a formal definition of the generator data structure. 
\begin{definition}
A \emph{$k$-generator} with range $R$, period $n$ and failure probability $\delta$ is a data structure with the following properties:
\begin{itemize}
\item[--] It supports an initialization operation that takes a random seed $s$ as input.
\item[--] After initialization it supports an \texttt{emit()} operation that returns a value from $R$. 
\item[--] There exists a set $B$ such that $\Pr[s \in B] \leq \delta$ and conditioned on $s \not\in B$ the sequence $(X_{1}, X_{2}, \dots, X_{n})$ of values returned by \texttt{emit()} is an $(n,k)$-sequence. 
\end{itemize}
A $k$-generator is \emph{explicit} if the initialization and emit operation has time complexity $\poly k$ and the probability of failure is zero. 
We refer to a $k$-generator as a constant time $k$-generator if the \texttt{emit()} operation has time complexity $O(1)$, not dependent on $k$.
\end{definition}

A $k$-generator differs from a data structure for representing a $k$-independent hash function by only allowing sequential access to the underlying $(n,k)$-sequence. 
It is this restriction on generators that allows us to obtain a better time-space tradeoff for the problem of generating $k$-independent variables than is possible by using a $k$-independent hash function directly as a generator.  
We are interested in the following parameters of $k$-generators: seed length, period, probability of failure, space needed by the data structure, the time complexity of the initialization operation and the time complexity of a single \texttt{emit()} operation.

\paragraph{Model of computation.}
Our results are stated in the word RAM model of computation with word length ${w = \Theta(\log|\F|)}$ bits. 
In addition to the standard bit manipulation and integer arithmetic instructions, we also assume the ability to perform arithmetic operations $(+, -, \times)$ over $\F$ in constant time. 
In the context of our results that use abelian groups $(A, +)$ we assume that an element of $A$ can be stored in a constant number of words and that addition can be performed in constant time. 

Let $\F_{q}$ denote a field of cardinality $q = p^{z}$ for $p$ prime and $z$ a positive integer.
Constant time arithmetic in $\F_{p}$ is supported on a standard word RAM with integer multiplication \cite{granlund1994}. 
Section \ref{sec:wordRAM} presents additional details about the algorithms required to implement finite field arithmetic over $\F_p$ and $\F_{2^{w}}$ and how they relate to a standard word RAM with integer multiplication.
\subsection{$k$-independent functions from the literature} \label{sec:hashing}
We now review the literature on $k$-independent functions and how they can be used to construct $k$-generators.
We distinguish between a $k$-independent function $f : U \to R$ and a $k$-independent hash function by letting the latter refer to a data structure 
that after initialization supports random access to the $(n,k)$-sequence defined by evaluating $f$ over~$U$.  
There exists an extensive literature that focuses on how to construct $k$-independent hash functions that offer a favorable tradeoff between representation space and evaluation time \cite{dietzfelbinger2012}. 
We note that a family of $k$-independent hash functions can be used to construct a $k$-generator by setting the seed to a random function in the family.

\paragraph{Constant time $k$-independent hash functions.}
A fundamental cell probe lower bound by Siegel \cite{siegel2004} shows that a data structure to support constant time evaluation of $f$ on every input in $U$ 
cannot use less than $\Omega(|U|^{\epsilon})$ space for some constant $\epsilon > 0$. 
This bound holds even for amortized constant evaluation time over functions in the family and elements in the domain.
From Siegel's lower bound, it is clear that we cannot use $k$-independent hash functions directly to obtain a constant time $k$-generator that uses only $O(k \poly \log k)$ words of space.

Known constructions of $k$-independent hash functions with constant evaluation time are based on expander graphs.
Siegel \cite{siegel2004} gave a probabilistic construction of a family of \mbox{$k$-independent} hash functions in the word RAM model based on an iterated product of bipartite expander graphs. 
Thorup \cite{thorup2013} showed that a simple tabulation hash function with high probability yields the type of expander graphs required by Siegel's construction.
Unfortunately only randomized constructions of the expanders required by these hash functions is known, introducing a positive probability of error in \mbox{$k$-generators} based on them.

\paragraph{Polynomials.}
Here we briefly review the classic construction of $k$-independent functions based on polynomials over finite fields.  
\begin{lemma}[Joffe \cite{joffe1974}, Carter and Wegman \cite{wegman1981}] \label{lem:kpoly}
For every choice of finite field $\F$ and every $k \leq |\F|$, let $\mathcal{H}_{k} \subset \F[X]$ be the family of polynomials of degree at most $k-1$ over $\F$.
${\mathcal{H}_{k} \subset \{ f \mid f \colon \F \to \F \}}$ is a family of $k$-independent functions.
\end{lemma}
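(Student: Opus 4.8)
The plan is to reduce the probabilistic statement to a statement about a single invertible linear map. Observe that a polynomial $f \in \mathcal{H}_k$ is completely determined by its coefficient vector $(a_0, \dots, a_{k-1}) \in \F^k$ via $f(X) = \sum_{j=0}^{k-1} a_j X^j$, and that this correspondence is a bijection between $\mathcal{H}_k$ and $\F^k$. Hence drawing $f$ uniformly at random from $\mathcal{H}_k$ is the same as drawing its coefficient vector uniformly at random from $\F^k$. The goal is therefore to show that, for any fixed set of distinct points $x_1, \dots, x_k \in \F$, the evaluation vector $(f(x_1), \dots, f(x_k))$ is uniformly distributed over $\F^k$, since uniformity over the product space $\F^k$ is by definition equivalent to the $k$ coordinates being independent and each uniform over $\F$.

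First I would express evaluation at the $k$ points as a linear map $\F^k \to \F^k$ sending the coefficient vector to $(f(x_1), \dots, f(x_k))$. The matrix of this map is the Vandermonde matrix $V$ with entries $V_{ij} = x_i^{\,j-1}$. The central step is to argue that $V$ is invertible: its determinant equals $\prod_{1 \le i < j \le k}(x_j - x_i)$, which is nonzero precisely because the points $x_1, \dots, x_k$ are distinct (and the hypothesis $k \le |\F|$ guarantees that $k$ distinct points exist). Equivalently, one may invoke Lagrange interpolation: for any target vector $(y_1, \dots, y_k) \in \F^k$ there is exactly one polynomial of degree at most $k-1$ taking the value $y_i$ at $x_i$, which exhibits the evaluation map as a bijection directly, without computing a determinant.

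Finally I would push the uniform distribution through the bijection. Since a uniformly random element of $\F^k$ mapped by any bijection $\F^k \to \F^k$ remains uniformly distributed over $\F^k$, the vector $(f(x_1), \dots, f(x_k))$ is uniform on $\F^k$, and this is exactly the asserted independence and uniformity of the $k$ values.

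There is no substantial obstacle here, as this is the classical fact underlying polynomial hashing; the only point requiring care is the clean translation between the combinatorial statement (bijectivity of the evaluation map) and the probabilistic statement (independence plus uniformity). The essential content is captured entirely by the nonvanishing of the Vandermonde determinant on distinct nodes.
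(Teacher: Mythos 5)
Your proof is correct. Note that the paper does not prove this lemma at all---it is stated as a classical result with citations to Joffe and to Carter and Wegman---so there is no paper-internal argument to compare against; your Vandermonde/Lagrange-interpolation argument is precisely the standard proof underlying those cited works: the evaluation map at $k$ distinct points is an invertible linear map on coefficient vectors, a uniform distribution pushed through a bijection of $\F^k$ stays uniform, and uniformity on the product space $\F^k$ is exactly joint independence with uniform marginals.
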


An advantage of using families of polynomials as hash functions is that they use near optimal randomness, allow any choice of $k \leq |\F|$, and have no probability of failure. 
It can also be noted that in the case where $k = O(\log |\F|)$ and we are restricted to linear space $O(k)$, 
polynomial hash functions evaluated using Horner's scheme are optimal \mbox{$k$-independent} hash functions \cite{larsen2012, siegel2004}.   

Using slightly more space and for sufficiently large $k$, a data structure by Kedlaya and Umans \cite{kedlaya2008} supports evaluation of a polynomial of degree $k$ over $\F$.
The space usage and preprocessing time of their data structure is $k^{1 + \epsilon}\log^{1 + o(1)}|\F|$ for constant $\epsilon > 0$.
After preprocessing a polynomial $f$, the data structure can evaluate $f$ in an arbitrary point of $\F$ using time $\poly(\log k)\log^{1 + o(1)}|\F|$.         

\paragraph{Multipoint evaluation.}
Using algorithms for multipoint evaluation of polynomials we are able to obtain a \mbox{$k$-generator} with $\poly \log k$ generation time and space usage that is linear in $k$. 
Multipoint evaluation of a polynomial~${f \in \F[X]}$ of degree at most $k-1$ in $k$ arbitrary points of~$\F$ has a time complexity of $O(k \log^{2} k \log \log k)$ in the word RAM model that supports field operations~\mbox{\cite[Corollary 10.8]{gathen2013}}. 
Bostan and Schost \cite{bostan2005} mention an algorithm for multipoint evaluation of $f$ over a geometric progression of $k$ elements with running time $O(k \log k \log \log k)$. 
In order to use this method to construct a $k$-generator with period $|\F|$ 
it is necessary to know a primitive element $\omega$ of $\F_{q}$ so we can perform multipoint evaluation over $\F^{*} = \{\omega^{0}, \omega^{1}, \dots, \omega^{q-2} \}$. 
Given the prime factorization of $q - 1$ there exists a Las Vegas algorithm for finding $\omega$ with expected running time $O(\log^{4} q)$~\mbox{\cite[Chapter 11]{shoup2009}}. 
In the following lemma we summarize the properties of $k$-generators based on multipoint evaluation of polynomials over finite fields.         

\begin{lemma}[{Gathen and Gerhard \cite[Corollary 10.8]{gathen2013}, Bostan and Schost \cite{bostan2005}}] \label{lem:multipoint}
For every finite field $\F$ there exists for every $k \leq |\F|$ and bijection $\pi : [|\F|] \to \F$ an explicit $k$-generator with period $|\F|$ and seed length $k$.
The space required by the generator and the initialization and generation time depends on the choice of $\pi$ and multipoint evaluation algorithm.
\begin{itemize}
\item[--] For arbitrary choice of $\pi$ there exists a $k$-generator with generation time $O(\log^{2} k \log \log k)$, intialization time $O(k \log^{2} k \log \log k)$ and space usage $O(k \log k)$. 
\item[--] Given a primitive element $\omega$ of $\F$ and a bijection $\pi(i) = \omega^{i}$ there exists a generator with generation time $O(\log k \log \log k)$, initialization time $O(k \log k \log \log k)$ and space usage $O(k)$.
\end{itemize}
\end{lemma}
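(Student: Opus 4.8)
The plan is to build the generator around a uniformly random polynomial of degree at most $k-1$ and to produce its values in blocks using multipoint evaluation. First I would set the seed to be the $k$ coefficients of a polynomial $f \in \mathcal{H}_{k}$, so that the seed length is exactly $k$ and, by Lemma~\ref{lem:kpoly}, evaluating $f$ at any $k$ distinct points of $\F$ yields $k$ independent and uniform values. The sequence emitted will be $(f(\pi(0)), f(\pi(1)), \dots, f(\pi(|\F|-1)))$. Since $\pi$ is a bijection, any $k$ positions in this sequence correspond to $k$ distinct evaluation points, so the sequence is an $(|\F|,k)$-sequence with period $|\F|$ and zero failure probability; this already shows the construction is an explicit $k$-generator, leaving only the running-time and space accounting.

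The key efficiency idea is amortization: rather than evaluating $f$ one point at a time, I would partition the $|\F|$ domain elements into consecutive blocks of $k$ points (the last block possibly shorter) and compute an entire block with a single multipoint evaluation. Initialization computes the first block and stores its $k$ values; each emit returns the next stored value, and when a block is exhausted the next block is produced by a fresh multipoint evaluation. For the first bullet I would use the subproduct-tree algorithm of Gathen and Gerhard~\cite{gathen2013}, which evaluates a degree-$(<k)$ polynomial at $k$ arbitrary points in $O(k \log^{2} k \log\log k)$ operations and $O(k \log k)$ space; amortizing over the $k$ emits of a block gives generation time $O(\log^{2} k \log\log k)$ per value, and the cost of the first block gives the stated initialization time.

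For the second bullet, fixing $\pi(i) = \omega^{i}$ makes each block a geometric progression: block $j$ consists of the points $\omega^{jk}, \omega^{jk+1}, \dots, \omega^{jk+k-1}$, a geometric progression with ratio $\omega$ and first term $\omega^{jk}$. I would therefore apply the Bostan--Schost algorithm~\cite{bostan2005}, which evaluates a degree-$(<k)$ polynomial over $k$ terms of a geometric progression in $O(k \log k \log\log k)$ operations and $O(k)$ space, yielding generation time $O(\log k \log\log k)$ per value and space $O(k)$ after amortizing. The element $0 \notin \F^{*}$ is not a power of $\omega$, so its value $f(0)$ (the constant coefficient) is emitted separately to reach period exactly $|\F|$.

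The main obstacle is that the generation-time bound is inherently amortized: the single emit that triggers a new block evaluation costs $\Theta(k \log^{2} k \log\log k)$ (respectively $\Theta(k \log k \log\log k)$) in the worst case. If a worst-case per-emit bound is wanted, I would de-amortize by spreading the computation of the next block evenly across the $k$ emits of the current block, performing a constant fraction of one multipoint evaluation per emit so that the next block is ready exactly when it is needed. This turns the amortized bounds into worst-case bounds at the cost of keeping two blocks in memory simultaneously, which does not change the asymptotic space in either case.
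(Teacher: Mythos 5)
Your proposal is correct and follows essentially the same route the paper sketches for this lemma: seed with the coefficients of a random degree-$(k-1)$ polynomial (Lemma~\ref{lem:kpoly}), emit its values in blocks of $k$ points via multipoint evaluation, using the Gathen--Gerhard algorithm for arbitrary $\pi$ and the Bostan--Schost geometric-progression algorithm when $\pi(i)=\omega^{i}$, and amortize the per-block cost over $k$ emits. You even tidy up two details the paper glosses over --- handling the point $0 \notin \F^{*}$ separately and de-amortizing by spreading the next block's computation across the current block's emits --- both of which are handled correctly.
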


\paragraph{Space lower bounds.}
Since randomness can be viewed as a resource like time and space, we are naturally interested in generators that can output long $k$-independent sequences using as few random bits as possible. 
Families of \mbox{$k$-independent functions} $f : U \rightarrow R$ with $U = R$ and $k \leq |U|$ will trivially have to use at least $k \log |U|$ random bits --- a bound matched by polynomial hash functions. 
We are often interested in generators with $|U| \gg |R|$, for example if we wish to use a generator for randomized load balancing in the heavily loaded case. 
A lower bound by Chor et al.~\cite{chor1985} shows that even in this case the minimal seed length required for $k$-independence is $\Omega(k \log |U|)$ for every $|R| \leq |U|$.
\subsection{Expander graphs}
All graphs in this paper are bipartite with $cm$ vertices on the left side, $m$ vertices on the right side and left outdegree $d$.
Graphs are specified by their edge function $\Gamma : [cm] \times [d] \to [m]$ where the notation $[n]$ is used to denote the set $\{0,1,\dots,n-1\}$.
Let $S$ be a subset of left side vertices. 
For convenience we use $\Gamma(S)$ to denote the neighbors of $S$. 
\begin{definition}
The bipartite graph $\Gamma : [cm] \times [d] \to [m]$ is \mbox{\emph{$(c,m,d,k)$-unique}} ($k$-unique) 
if for every $S \subseteq [cm]$ with $|S| \leq k$ there exists $y \in \Gamma(S)$ such that $y$ has a unique neighbor in $S$. 
An expander graph is \emph{explicit} if it has a deterministic description and $\Gamma$ is computable in time polynomial in $\log cm + \log d$. 
\end{definition}
The performance of our generator constructions are directly tied to the parameters of such expanders. 
In particular, we would like explicit expanders that simultanously have a low outdegree $d$, are highly unbalanced and are $k$-unique for $k$ as close to $m$ as possible.
A direct application of a result by Capalbo et al. \cite[Theorem 7.1]{capalbo2002} together with an equivalence relation between different types of expander graphs from Ta-Shma et al. \cite[Theorem 8.1]{tashma2007} yields explicit constructions of unbalanced unique neighbor expanders.\footnote{We state the results here without the restriction from \cite{capalbo2002} that $c$ and $m$ are powers of two. We do this to simplify notation and it only affects constant factors in our results.}  
\begin{lemma}[Capalbo et al. {\cite[Theorem 7.1]{capalbo2002}}] \label{lem:explicit}
For every choice of $c$ and $m$ there exists a $(c,m,d,k)$-unique expander with $d = \poly \log c$ and $k = \Omega(m/d)$. For constant $c$ the expander is explicit. 
\end{lemma}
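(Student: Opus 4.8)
The plan is to derive the statement by combining the two cited results with a short double-counting argument, so the proof is essentially a translation between parameter conventions rather than a new construction. First I would invoke Capalbo et al.~\cite[Theorem 7.1]{capalbo2002} to obtain, for $c$ and $m$ both powers of two, a bipartite graph $\Gamma : [cm] \times [d] \to [m]$ with left outdegree $d = \poly\log c$ that is a \emph{lossless} vertex expander: there is a constant $\epsilon < 1/2$ such that every $S \subseteq [cm]$ with $|S| \leq k$ satisfies $|\Gamma(S)| \geq (1-\epsilon)\, d\, |S|$, where the threshold is $k = \Omega(m/d)$. This threshold is essentially the largest possible, since the saturation bound $(1-\epsilon)\,d\,k \leq m$ forces $k = O(m/d)$. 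The power-of-two restriction present in \cite{capalbo2002} is removed by rounding $c$ and $m$ up to the nearest powers of two, which affects only the constants hidden in $d$ and $k$, as noted in the footnote.

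Next I would pass from lossless expansion to the unique-neighbor property by counting. Suppose some nonempty $S$ with $|S| \leq k$ had \emph{no} unique neighbor, i.e.\ every $y \in \Gamma(S)$ has at least two neighbors in $S$. Counting the $d\,|S|$ edges leaving $S$ against their right endpoints gives $d\,|S| \geq 2\,|\Gamma(S)| \geq 2(1-\epsilon)\,d\,|S|$, which for $\epsilon < 1/2$ is impossible when $S$ is nonempty. Hence every nonempty $S$ of size at most $k$ contains a vertex with a unique neighbor, so $\Gamma$ is $(c,m,d,k)$-unique with $d = \poly\log c$ and $k = \Omega(m/d)$, exactly as required.

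If the parametrization in \cite[Theorem 7.1]{capalbo2002} is stated in the language of condensers or of min-entropy thresholds rather than directly as vertex expansion, then the role of Ta-Shma et al.~\cite[Theorem 8.1]{tashma2007} is to supply the equivalence that recasts it into the bipartite vertex-expander form above before the counting step is applied; this is the only place where the second citation enters. Finally, for explicitness I would appeal to the fact that the construction of \cite{capalbo2002} is deterministic with $\Gamma$ computable in time polynomial in $\log(cm) + \log d$. When $c = O(1)$ the outdegree $d = \poly\log c$ is constant and the construction satisfies the explicitness condition in the definition above, whereas for non-constant $c$ we claim only existence.

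The main obstacle I anticipate is purely bookkeeping: matching the parameter conventions of the two papers so that the expansion factor provably exceeds $d/2$ (which is precisely what the unique-neighbor argument consumes) while simultaneously certifying that $d$ stays $\poly\log c$ and that the unique-neighbor threshold lands at $k = \Omega(m/d)$, and then confirming that the explicit, deterministic form of the construction survives exactly in the constant-$c$ regime.
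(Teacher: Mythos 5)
Your proposal is correct and takes essentially the same route as the paper: the paper states this lemma without proof, as a direct application of Capalbo et al.\ (Theorem 7.1) combined with the equivalence from Ta-Shma et al.\ (Theorem 8.1), and it dismisses the power-of-two restriction in a footnote exactly as you do. The one step you make explicit that the paper leaves implicit is the standard edge-counting argument showing that expansion beyond $d/2$ forces a unique neighbor; this is precisely the fact the paper relies on elsewhere (Section 4, citing Siegel's Lemma 2.8) in contrapositive form, so your derivation fills in the intended details rather than deviating from them.
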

We note the following simple technique for constructing a larger $k$-unique expander from a smaller $k$-unique expander.
\begin{lemma} \label{lem:stacking}
Let $\Gamma$ be a $(c,m,d,k)$-unique expander with $cm \times m$ adjacency matrix $\mat{M}$.
For any positive integer $b$ define $\Gamma^{(b)}$ as the bipartite graph with block diagonal adjacency matrix $\mat{M}^{(b)} = \diag(\mat{M}, \dots, \mat{M})$ with $b$ blocks in the diagonal.
Then $\Gamma^{(b)}$ is a $(c, bm, d, k)$-unique expander.
\end{lemma}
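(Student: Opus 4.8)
The plan is to exploit the fact that the block-diagonal structure of $\mat{M}^{(b)} = \diag(\mat{M},\dots,\mat{M})$ makes $\Gamma^{(b)}$ a disjoint union of $b$ copies of $\Gamma$, so that the unique-neighbor property of each copy can be invoked locally. First I would fix the natural partition of the $cbm$ left vertices of $\Gamma^{(b)}$ into $b$ consecutive blocks of size $cm$, and likewise partition its $bm$ right vertices into $b$ blocks of size $m$, where the $j$-th left block and the $j$-th right block are connected exactly as in $\Gamma$. The crucial structural observation, immediate from the block-diagonal form, is that every edge of $\Gamma^{(b)}$ joins a left vertex and a right vertex lying in the \emph{same} block; consequently each left vertex retains outdegree exactly $d$, and the neighborhood of any left vertex is contained entirely within its own right block.

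Next I would verify the parameters. The left side has $b \cdot cm = c(bm)$ vertices, the right side has $bm$ vertices, and the outdegree is $d$, so $\Gamma^{(b)}$ has the shape required of a $(c,bm,d,k)$-unique expander; it then remains only to establish the $k$-unique property.

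For that, take any nonempty $S \subseteq [cbm]$ with $|S| \le k$ (the empty set makes the condition vacuous) and write $S = S_1 \cup \dots \cup S_b$, where $S_j$ is the part of $S$ falling in the $j$-th left block, regarded as a subset of the left side $[cm]$ of the corresponding copy of $\Gamma$. Since $S$ is nonempty, at least one $S_j$ is nonempty, and $|S_j| \le |S| \le k$, so the $k$-unique property of $\Gamma$ applies to $S_j$ and yields a right vertex $y$ in the $j$-th right block having a unique neighbor in $S_j$. Because edges never cross blocks, the neighbors of $y$ lying in $S$ are precisely its neighbors lying in $S_j$; hence $y \in \Gamma^{(b)}(S)$ has a unique neighbor in $S$, which is exactly what $k$-uniqueness of $\Gamma^{(b)}$ demands.

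There is essentially no obstacle here beyond careful bookkeeping: the content of the lemma is that the parameter $k$ does not degrade under stacking even though both vertex sides grow by a factor of $b$, and the reason is that any obstruction set of size at most $k$ is, block by block, confined to sets of size at most $k$, so some single block already contains a full witness that the disjoint structure then preserves globally. The only subtlety worth flagging is the edge-locality claim, which must be stated explicitly so as to justify that a unique neighbor within $S_j$ remains a unique neighbor within all of $S$.
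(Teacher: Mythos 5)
Your proof is correct, and it is the natural argument: the paper actually states Lemma \ref{lem:stacking} \emph{without} proof (evidently regarding it as immediate), so there is no authorial proof to diverge from. Your write-up — partitioning $S$ into per-block pieces $S_j$, applying $k$-uniqueness of $\Gamma$ to a nonempty piece, and using the edge-locality of the block-diagonal structure to conclude that the unique neighbor in $S_j$ is also unique in $S$ — supplies exactly the bookkeeping the authors left implicit.
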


\paragraph{From expanders to independence.}
By associating each right vertex in a $(c,m,d,k)$-unique expander with a position in a $(m,dk)$-sequence over an abelian group $(A,+)$, we can generate a $(cm,k)$-sequence over $A$.
This approach was pioneered by Siegel and has been used in different constructions of families of $k$-independent hash functions~\cite{siegel2004, thorup2013}.    
\begin{lemma}[Siegel {\cite[Lemma 2.6, Corollary 2.11]{siegel2004}}] \label{lem:expanderhashing}
Let $\Gamma : [cm] \times [d] \to [m]$ be a $k$-unique expander and let $h : [m] \rightarrow A$ be a $dk$-independent function with range an abelian group. 
Let $g : [cm] \rightarrow A$ be defined as  
\begin{equation}
g(x) = \sum_{y \in \Gamma(\{x\})}h(y).
\end{equation}
Then $g$ is a $k$-independent function.
\end{lemma}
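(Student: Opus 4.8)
The plan is to verify the definition of $k$-independence directly: I will show that for an arbitrary choice of $k$ distinct inputs $x_{1},\dots,x_{k} \in [cm]$, the tuple $(g(x_{1}),\dots,g(x_{k}))$ is uniformly distributed over $A^{k}$ when $h$ is drawn uniformly from its $dk$-independent family. Writing $S=\{x_{1},\dots,x_{k}\}$ and $N=\Gamma(S)$, the first observation is that $|N|\le kd$, so the $dk$-independence of the family guarantees that the values $\{h(y):y\in N\}$ are jointly independent and uniform over $A$. Since each $g(x_{i})$ is a sum of $h$-values at neighbors lying inside $N$, the tuple $(g(x_{1}),\dots,g(x_{k}))$ is a fixed $A$-linear image of these independent uniform values, and it remains only to argue that this map is bijective for a suitable choice of free coordinates.

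The key structural ingredient is to convert the unique-neighbor guarantee into a triangular system by peeling. Starting from $S_{k}=S$, the $k$-unique property yields some $x^{(k)}\in S_{k}$ with a neighbor $y^{(k)}$ whose only neighbor in $S_{k}$ is $x^{(k)}$; removing $x^{(k)}$ leaves a set $S_{k-1}$ of size $k-1\le k$, to which the property applies again, and iterating down to a single vertex produces an ordering $x^{(1)},\dots,x^{(k)}$ of the inputs together with neighbors $y^{(1)},\dots,y^{(k)}$ such that the only neighbor of $y^{(j)}$ inside $S_{j}=\{x^{(1)},\dots,x^{(j)}\}$ is $x^{(j)}$. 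In particular $y^{(j)}\in\Gamma(\{x^{(j)}\})$ while $y^{(j)}\notin\Gamma(\{x^{(j')}\})$ for every $j'<j$. One checks that the $y^{(j)}$ are distinct, since a coincidence $y^{(i)}=y^{(j)}$ with $i>j$ would give this vertex the two neighbors $x^{(j)},x^{(i)}$ inside $S_{i}$, contradicting its uniqueness there. Recording in a $k\times k$ matrix $M$ whether $y^{(i)}$ is a neighbor of $x^{(j)}$ thus yields a lower-triangular incidence pattern with ones on the diagonal.

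With this triangular structure in hand, I would condition on the values $\{h(y):y\in N\setminus\{y^{(1)},\dots,y^{(k)}\}\}$ and regard $h(y^{(1)}),\dots,h(y^{(k)})$ as the remaining free coordinates, which stay jointly uniform and independent after conditioning. Because $y^{(j)}$ contributes to $g(x^{(j)})$ but to none of $g(x^{(1)}),\dots,g(x^{(j-1)})$, forward substitution shows that the map $(h(y^{(1)}),\dots,h(y^{(k)}))\mapsto(g(x^{(1)}),\dots,g(x^{(k)}))$ is a bijection of $A^{k}$ for any fixing of the conditioned coordinates: given any target, one solves for $h(y^{(1)})$ from $g(x^{(1)})$, then $h(y^{(2)})$ from $g(x^{(2)})$, and so on. A bijection carries the uniform distribution to the uniform distribution, so conditionally $(g(x^{(1)}),\dots,g(x^{(k)}))$ is uniform over $A^{k}$; as this holds for every fixing of the conditioned coordinates, the unconditional distribution is uniform as well, and undoing the reordering recovers the claim for $(g(x_{1}),\dots,g(x_{k}))$.

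I expect the main obstacle to be the peeling step and the verification that it yields a genuinely unipotent lower-triangular system over the abelian group $A$: one must argue that the selected unique neighbors are distinct, and that unipotent lower-triangularity forces bijectivity even when $A$ is an arbitrary abelian group rather than a field, where nonvanishing determinants are unavailable and one must rely instead on explicit forward substitution.
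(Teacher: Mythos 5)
Your proof is correct. Note that the paper itself gives no proof of this lemma --- it is imported by citation from Siegel \cite[Lemma 2.6, Corollary 2.11]{siegel2004} --- and your argument (peeling off unique neighbors to get a unipotent triangular system over $N = \Gamma(S)$, then conditioning on the non-pivot values and solving by forward substitution, which needs only subtraction in the abelian group $A$) is essentially Siegel's original argument, carried out completely and with the right care about distinctness of the $y^{(j)}$ and about not relying on determinants.
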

\section{Explicit constant time generators} \label{sec:explicit}
In this section we show how to obtain a constant time \mbox{$k$-generator} by combining an explicit $ \poly k$-generator with a cascading composition of unbalanced unique neighbor expanders. 
Our technique works by generating a small number of highly independent elements in an abelian group and then successively applying constant degree expanders to produce a greater number of less independent elements. 
We continue this process up until the point where the final number of elements is large enough to match the cost of generating the smaller batch of highly independent elements.    

The generator has two components.
The first component is an explicit $m$-generator $g_{0} : [n] \to A$ with period $n$ and range an abelian group $A$.
The second component is an explicit sequence $\left(\Gamma_{i}\right)^{t}_{i = 1}$ of unbalanced unique neighbor expanders.
The expanders are constructed such that the left side of the $i$th expander matches the right side of the $(i+1)$th expander.
By Lemma \ref{lem:explicit}, for every choice of imbalance $c$, target independence $k$ and length of the expander sequence $t$ there exists a sequence of expanders with the property that 
\begin{equation}
\Gamma_{i} \text{ is } (c, c^{i-1}m, d, d^{t-i}k)\text{-unique}, \label{eq:expandersequence}
\end{equation}
for $m = O(d^{t}k)$ and $d = \poly \log c$.
For constant $c$ each expander in the sequence is explicit.

We now combine the explicit $m$-generator $g_{0}$ and the sequence of expanders $\left(\Gamma_{i}\right)^{t}_{i = 1}$ to define the $k$-independent function $g_{t}$.
Let $b = m/n$ and assume for simplicity that $m$ divides $n$. 
For each $\Gamma_{i}$ we use the technique from Lemma \ref{lem:stacking} to construct a $(c, c^{i-1}n, d, d^{t-i}k)$-unique expander $\Gamma_{i}^{(b)}$.
Let $x_{i}$ denote a number in $[c^{i}n]$ corresponding to a vertex in the right side of $\Gamma_{i}^{(b)}$. 
We are now ready to give a recursive definition of $g_{i} : [c^{i}n] \to A$.
\begin{equation}
g_{i}(x_{i}) = 
\sum \limits_{x_{i-1} \in \Gamma_{i}^{(b)}(\{x_{i}\})} g_{i-1}(x_{i-1}), \quad 1 \leq i \leq t.
\label{eq:gexplicit}
\end{equation}
\begin{lemma}
$g_{i}$ is $d^{t-i}k$-independent. 
\end{lemma}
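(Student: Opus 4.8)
The plan is to prove the statement by induction on $i \in \{0,1,\dots,t\}$, using Siegel's expander hashing result (Lemma~\ref{lem:expanderhashing}) to advance the induction one level at a time. The base case $i=0$ records that $g_0$ is highly independent, and the inductive step shows that pushing $g_{i-1}$ through the expander $\Gamma_i^{(b)}$ costs exactly a factor of $d$ in the independence parameter — which is precisely the slack that an outdegree-$d$ unique neighbor expander consumes in Lemma~\ref{lem:expanderhashing}. So the independence will degrade from $d^t k$ at the bottom to $d^{t-i}k$ at level $i$, reaching the target $k$ at $i=t$.

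First I would handle the base case $i=0$. By construction $g_0 : [n] \to A$ is an explicit $m$-generator, so the emitted sequence $(g_0(0),\dots,g_0(n-1))$ is $m$-independent. Since the parameters are chosen so that $m \geq d^t k$ (recall $m = O(d^t k)$ from \eqref{eq:expandersequence}, and we take the hidden constant large enough for this to hold), and since any $m$-independent function is also $j$-independent for every $j \leq m$, the function $g_0$ is in particular $d^t k = d^{t-0}k$-independent, which is exactly the claim at $i=0$.

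For the inductive step, assume $g_{i-1}$ is $d^{t-(i-1)}k = d^{t-i+1}k$-independent. By \eqref{eq:expandersequence} the expander $\Gamma_i$ is $(c, c^{i-1}m, d, d^{t-i}k)$-unique, and by the stacking construction of Lemma~\ref{lem:stacking} the graph $\Gamma_i^{(b)}$ is $(c, c^{i-1}n, d, d^{t-i}k)$-unique; in particular it is $d^{t-i}k$-unique with outdegree $d$ and right side $[c^{i-1}n]$, which is exactly the domain of $g_{i-1}$. I would then apply Lemma~\ref{lem:expanderhashing} with $\Gamma = \Gamma_i^{(b)}$, uniqueness parameter $d^{t-i}k$, and $h = g_{i-1}$. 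The lemma requires $h$ to be $d \cdot d^{t-i}k = d^{t-i+1}k$-independent, which is precisely the inductive hypothesis, and its conclusion is that $g_i$, as defined by \eqref{eq:gexplicit}, is $d^{t-i}k$-independent, completing the step.

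The step that needs the most care is the bookkeeping of the independence parameters, since that is the only place the argument could go wrong: the chain works exactly because the uniqueness level of $\Gamma_i^{(b)}$ is $d^{t-i}k$ while its outdegree is $d$, so Lemma~\ref{lem:expanderhashing} consumes the $d^{t-i+1}k$-independence of $g_{i-1}$ and returns $d^{t-i}k$-independence, losing one factor of $d$ per level. I would double-check that (i) the stacking of Lemma~\ref{lem:stacking} preserves the uniqueness parameter $d^{t-i}k$ while only enlarging the right side from $c^{i-1}m$ to $c^{i-1}n$, and (ii) the base case genuinely supplies enough independence, i.e. $m \geq d^t k$, so that after $t$ applications the independence at $i=t$ is $d^{0}k = k$. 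No deeper obstacle arises beyond this matching of parameters; all the real content lives in Lemmas~\ref{lem:stacking} and~\ref{lem:expanderhashing}.
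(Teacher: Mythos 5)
Your proof is correct and takes essentially the same route as the paper's: induction on $i$, with the base case supplied by the independence of the $m$-generator $g_0$ and the inductive step obtained by applying Lemma~\ref{lem:expanderhashing} to the stacked expander $\Gamma_i^{(b)}$, whose uniqueness parameter $d^{t-i}k$ and outdegree $d$ consume exactly the $d^{t-i+1}k$-independence of the previous level. Your explicit verification that $m \geq d^{t}k$ (which the paper compresses into ``by definition'') and of the domain matching between $\Gamma_i^{(b)}$ and $g_{i-1}$ is just a more careful rendering of the same argument.
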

\begin{proof}
We proceed by induction on $i$. 
By definition, $g_{0} : [n] \to A$ is $d^{t}k$-independent.
Assume by induction that $g_{i} : [c^{i}n] \to A$ is $d^{t-i}k$-independent.
By definition $\Gamma_{i+1}^{(b)}$ is a $(c, c^{i}n, d, d^{t-(i+1)}k)$-unique expander.
Applying Lemma \ref{lem:expanderhashing} we have that $g_{i+1} : [c^{i+1}n] \to A$ is $d^{t-(i+1)}k$-independent.
\end{proof}

We will now show that $g_{t}$ supports fast sequential evaluation and prove that we can use $g_{t}$ to construct an explicit constant time $k$-generator from any explicit \mbox{$m$-generator}, for an appropriate choice of $m$.
Divide the domain of each $g_{i}$ evenly into $b = n/m$ batches of size $c^{i}m$ corresponding to each block of the adjacency matrix of $\Gamma_{i}$ used to construct $\Gamma_{i}^{(b)}$ and index the batches by $j \in [b]$. 
In order to evaluate $g_{i+1}$ over batch number $j$ it suffices to know $\Gamma_{i+1}$ and the values of $g_{i}$ over batch number $j$.
Fast sequential evaluation of $g_{t}$ is achieved in the following steps.
First we tabulate the sequence of expanders $\left(\Gamma_{i}\right)^{t}_{i = 1}$ such that $\Gamma_{i}(\{x_{i}\})$ can be read in $d$ operations.
Secondly, to evaluate $g_{t}$ over batch $j$, we begin by tabulating the output of $g_{0}$ over batch $j$ and then successively apply our tabulated expanders to produce tables for the output of $g_{1}, g_{2}, \dots, g_{t}$ over batch $j$.

Given tables for the sequence of expanders and assuming that the generator underlying $g_{0}$ has been initialized, we now consider the average number of operations used per output when performing batch-evaluation of $g_{t}$.
The number of values output is $c^{t}m$.
The cost of emitting $m$ values from $g_{0}$ is by definition at most $\poly(m)$.
The cost of producing tables for the output of $g_{1}, g_{2}, \dots, g_{t}$ for the current batch is given by $\sum_{i=1}^{t}dc^{i}m = O(dc^{t}m)$ for $c > 1$.
The average number of operations used per output when performing batch-evaluation of $g_{t}$ is therefore bounded from above by
\begin{equation}
\frac{O(dc^{t}m) + \poly m}{c^{t}m} = O(d) + \frac{\poly m}{c^{t}}. \label{eq:averagetime}
\end{equation}
The following lemma states that we can obtain a constant time $k$-generator from every explicit $m$-generator by setting $t = O(\log k)$ and choosing $c$ to be an appropriately large constant.  
\begin{lemma} \label{lem:general}
Let $A$ be an abelian group with constant time addition. 
Suppose there exists an explicit $m$-generator with range $A$, period $n$ and space usage $\poly m$.
Then there exists a positive constant $\epsilon$ such that for every $k \leq m^{\epsilon}$ 
there exists an explicit constant time $k$-generator with range $A$, period $n$, and seed length, space usage and initialization time $\poly k$.
\end{lemma}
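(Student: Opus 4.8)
The plan is to instantiate the general machinery built up in equations~\eqref{eq:gexplicit}--\eqref{eq:averagetime} with a careful choice of the parameters $m$, $t$, and $c$, and then verify that all the promised complexity bounds follow. The key observation is that Lemma~\ref{lem:general} is really a packaging result: all the combinatorial work (the cascade of expanders, the inductive independence proof, the batch-evaluation scheme) is already done, so the proof amounts to choosing parameters so that the average-time bound in~\eqref{eq:averagetime} collapses to $O(1)$ while the target independence $k$ is achieved.

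First I would set the length of the expander cascade to $t = \Theta(\log k)$ and fix $c$ to be a large constant. Since each expander in the sequence~\eqref{eq:expandersequence} has outdegree $d = \poly \log c$, for constant $c$ the outdegree $d$ is itself a constant, so the term $O(d)$ in~\eqref{eq:averagetime} is $O(1)$. The base generator $g_0$ must be an $m$-generator with $m = O(d^{t}k)$; since $d$ is constant and $t = \Theta(\log k)$, this gives $m = k \cdot d^{\Theta(\log k)} = \poly k$, which is where the polynomial relationship $k \leq m^{\epsilon}$ comes from. With this choice, the final output $g_t$ is $d^{t-t}k = k$-independent by the preceding lemma, as required.

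Next I would bound the second term of~\eqref{eq:averagetime}. We have $\poly m / c^{t}$; writing $m = \poly k$ and $c^{t} = c^{\Theta(\log k)} = k^{\Theta(\log c)}$, by taking $c$ a sufficiently large constant we can make the exponent $\Theta(\log c)$ exceed the degree of the polynomial $\poly m$ (as a polynomial in $k$), so that $\poly m / c^{t} = O(1)$. This is the step where the constant $\epsilon$ in the statement gets pinned down: $\epsilon$ is determined by how large $c$ must be relative to the degree of the space/time polynomial of the base generator. Combining both terms, the average cost per output is $O(1)$, and since the batch size $c^{t}m$ is itself $\poly k$, the worst-case emit time after appropriate buffering is also $O(1)$, not merely amortized. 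I would also note that the period of $g_t$ equals the period $n$ of the base generator $g_0$, since the stacking construction in Lemma~\ref{lem:stacking} preserves the underlying sequence length, and that the seed length, space, and initialization time are dominated by those of $g_0$ together with the tabulation of the $t$ constant-size expanders, all of which is $\poly m = \poly k$.

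The main obstacle I anticipate is the bookkeeping needed to turn the \emph{amortized} $O(1)$ bound of~\eqref{eq:averagetime} into a \emph{worst-case} $O(1)$ guarantee for a single \texttt{emit()} call, as demanded by the definition of a constant time $k$-generator. The batch-evaluation scheme produces an entire batch of $c^{t}m$ values at once at a total cost of $O(dc^{t}m) + \poly m$; to deliver outputs one at a time in constant worst-case time I would maintain the precomputed batch tables as a buffer and deamortize by spreading the work of computing the next batch across the emits of the current batch, which is possible precisely because the per-output amortized cost is constant and the batch size is a fixed $\poly k$. Care is needed to ensure the tabulated expander lookups $\Gamma_i(\{x_i\})$ really take $O(d) = O(1)$ time each and that the space for the $t$ intermediate tables of sizes $c^{i}m$ sums to $\poly k$; since $\sum_{i=1}^{t} c^{i} m = O(c^{t} m) = \poly k$ for constant $c$, this holds.
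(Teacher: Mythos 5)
Your proposal is correct and takes essentially the same route as the paper's own proof: instantiate the expander cascade \eqref{eq:expandersequence} with $t = \Theta(\log k)$ and a sufficiently large constant $c$ (so that $d = \poly\log c$ is constant and $\poly m / c^{t} = O(1)$ in \eqref{eq:averagetime}), then deamortize by having each \texttt{emit()} return a value from the current batch while doing a constant amount of work toward preparing the next batch. If anything, your account of the deamortization step and of how $\epsilon$ is pinned down by the degree of the base generator's polynomial is more explicit than the paper's own two-line argument.
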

\begin{proof}
The sequence of expanders $\left(\Gamma_{i}\right)^{t}_{i = 1}$ with the properties given in \eqref{eq:expandersequence} exists for $m = O(d^{t}k)$ and $d = \poly \log c$ and is explicit for $c$ constant.
By inserting $m = O(d^{t}k)$ into equation \eqref{eq:averagetime} it can be seen that the average number of operations is constant for $c = O(1)$ and $t = O(\log k)$ with constants that depend on the parameters of the $m$-generator. 
The $k$-generator is initialized by initializing the $m$-generator, finding and tabulating the sequence of expanders and producing the first batch of values, all of which can be done in $\poly k$ time and space.
After initialization, each call to \texttt{emit()} will return a value from the current batch and use a constant number of operations for the task of preparing the next batch of outputs.    
\end{proof}

We now show our main theorem about explicit constant time $k$-generators over finite fields. 
The construction uses an explicit $m$-generator based on multipoint evaluation. 
Combined with the approach of Lemma \ref{lem:general} this yields a near-optimal time-space tradeoff for $k$-generation.
\begin{restate}[Repeated]
For every finite field $\F$ with constant time arithmetic there exists a data structure that for every choice of $k \leq |\F| /\! \poly \log |\F|$ is an explicit constant time $k$-generator.
The generator has range $\F$, period $|\F|\poly \log k$, and seed length, space usage and initialization time $k \poly \log k$.
\end{restate}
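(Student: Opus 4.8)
The plan is to instantiate the expander cascade of Section~\ref{sec:explicit} with the multipoint-evaluation generator of Lemma~\ref{lem:multipoint} in the role of the base generator $g_0$, and then to sharpen the averaging argument behind Lemma~\ref{lem:general} so that the resulting space is $k\poly\log k$ rather than the $\poly k$ that the black-box statement of Lemma~\ref{lem:general} would yield. The whole improvement comes from one observation: a generic explicit $m$-generator may spend $\poly m$ per emitted value, but the multipoint generator produces an entire batch of $m$ consecutive outputs in $O(m\log^2 m\log\log m)$ operations, i.e.\ only $\poly\log k$ amortized per value. Replacing the $\poly m$ bound in \eqref{eq:averagetime} by this much smaller quantity is what lets the cascade be short.

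Concretely, I would take $g_0$ to be the arbitrary-$\pi$ multipoint generator of Lemma~\ref{lem:multipoint}: an explicit $m$-generator over $\F$ of period $|\F|$, seed length $m$ (the coefficients of a degree-$(m-1)$ polynomial, which is $m$-independent by Lemma~\ref{lem:kpoly}), space $O(m\log m)$, and per-batch cost $O(m\log^2 m\log\log m)$; crucially this version needs no primitive element, so the construction stays explicit. I keep the expander sequence $(\Gamma_i)_{i=1}^{t}$ satisfying \eqref{eq:expandersequence} with $c$ a sufficiently large constant and $d=\poly\log c=O(1)$. Substituting the sharper batch cost into \eqref{eq:averagetime}, the amortized cost per output of $g_t$ becomes $O(d)+O(\log^2 m\log\log m)/c^{t}$, which is $O(1)$ as soon as $c^{t}=\Omega(\log^2 m\log\log m)$. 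In contrast to the proof of Lemma~\ref{lem:general}, this forces only $t=\Theta(\log\log k)$ rather than $t=\Theta(\log k)$.

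With $c$ and $d$ constant and $t=\Theta(\log\log k)$ I then have $d^{t}=\poly\log k$ and $c^{t}=\poly\log k$. The defining relation $m=O(d^{t}k)$ becomes $m=O(k\poly\log m)$, which I would resolve to $m=k\poly\log k$ using $\log m=\Theta(\log k)$. The advertised parameters then follow by direct accounting: the period is $c^{t}n=|\F|\poly\log k$; the seed is the $m=k\poly\log k$ polynomial coefficients; the space is $O(m\log m)$ for $g_0$ plus $O(c^{t}md)$ for the tabulated expanders and batch tables, all $k\poly\log k$; and initialization is dominated by one multipoint evaluation and the expander tabulation, again $k\poly\log k$. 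The requirement $m\le|\F|$ of Lemma~\ref{lem:multipoint} translates into $k\poly\log k\le|\F|$, i.e.\ $k\le|\F|/\poly\log|\F|$, matching the hypothesis. To upgrade the amortized $O(1)$ per call to a worst-case $O(1)$ I would use the de-amortization already invoked in Lemma~\ref{lem:general}: keep the current output batch buffered and spread the $O(c^{t}m)$ work of preparing the next batch evenly over the $c^{t}m$ \texttt{emit()} calls that consume the current one.

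The step I expect to be the crux is the self-referential sizing in the previous paragraph: verifying that $t=\Theta(\log\log k)$ is simultaneously large enough to drive the amortized term of \eqref{eq:averagetime} to a constant and small enough that $d^{t}$ and $c^{t}$ remain $\poly\log k$, so that the recursion $m=O(d^{t}k)$ closes at $m=k\poly\log k$. Everything else is bookkeeping over the components assembled in Section~\ref{sec:explicit}, but this balancing is precisely where the improvement over the generic Lemma~\ref{lem:general}, and hence the near-optimal $k\poly\log k$ space, is obtained.
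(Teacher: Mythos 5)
Your proposal is correct and follows essentially the same route as the paper's own proof: instantiate the Section~\ref{sec:explicit} cascade with the multipoint-evaluation generator of Lemma~\ref{lem:multipoint} as $g_0$, keep $c$ and $d$ constant, and observe that the amortized batch cost (the paper uses the slightly coarser bound $O(m\log^3 m)$ in \eqref{eq:averagetimefield}) forces only $t=\Theta(\log\log k)$, whence $d^t, c^t = \poly\log k$ and all parameters close at $k\poly\log k$. Your write-up in fact makes explicit several bookkeeping steps (the resolution of $m=O(d^t k)$, the period $c^t n$, and the translation of $m\le|\F|$ into $k\le|\F|/\poly\log|\F|$) that the paper leaves implicit via its reference to Lemma~\ref{lem:general}.
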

\begin{proof}
Fix the choice of finite field $\F$. 
By Lemma~\ref{lem:multipoint} there exists an explicit $m$-generator in $\F$ for $m \leq |\F|$ with period $|\F|$ that uses time $O(m \log^{3} m)$ to emit $m$ values.
Fix some constant $c > 1$ and let $\left(\Gamma_{i}\right)^{t}_{i = 1}$ denote an explicit sequence of constant degree expanders with the properties given by \eqref{eq:expandersequence}.
The average number of operations per $k$-independent value output by $g_{t}$ when performing batch evaluation is given by
\begin{equation}
\frac{O(dc^{t}m) + O(m \log^{3} m)}{c^{t}m} = O(d) + \frac{O(\log^{3} d^{t}k)}{c^{t}}. \label{eq:averagetimefield}
\end{equation}
Setting $t = O(\log \log k)$ and following the approach of Lemma \ref{lem:general} we obtain a $k$-generator with the stated properties. 
\end{proof}
Based on the discussion in a paper by Capalbo \cite{capalbo2005} that introduces unbalanced unique neighbor expanders for concrete values of $c$ and $d$, it appears likely that the constants hidden in Theorem \ref{thm:explicit} for the current best explicit constructions make our explicit generators unsuited for practical use since $c$ is close to $1$ when $d$ is reasonably small. 
The next section explores how randomly generated unique neighbor expanders can be used to show stronger existence results and yield $k$-generators with tractable constants.  
%
\section{Constant time generators with optimal seed length} \label{sec:probabilistic}
Randomly constructed expanders of the type used in this paper have stronger properties than known explicit constructions, and can be generated with an overwhelming probability of success.
There is no known efficient algorithm for verifying whether a given graph is a unique neighbor expander.
Therefore randomly generated expanders cannot be used to replace explicit constructions without some probability of failure.

In this section we apply the probabilistic method to show the existence of $k$-generators with better performance characteristics than those based on known explicit constructions of expanders. 
We are able to show the existence of constant time generators with optimal seed length that use $O(k\log^{2+\varepsilon}k)$ words of space for any constant $\varepsilon > 0$. 
Furthermore, such generators can be constructed for any choice of constant failure probability $\delta > 0$. 
The generators we consider in this section use only a single expander graph but are otherwise identical to the generators described in Section \ref{sec:explicit}.
Using a single expander graph suffices for constant time generation because the probabilistic constructions are powerful enough to support an imbalance of $c = \poly \log k$ while maintaining constant degree.
This imbalance is enough to amortize the cost of multipoint evaluation in a single expansion step as opposed to the sequence of explicit expanders employed in Theorem \ref{thm:explicit}. 
Our arguments are a straightforward application of the probabilistic method, but we include them for completeness and because we are interested in somewhat nonstandard parameters.

We consider the following randomized construction of a $(c,m,d,k)$-unique expander $\Gamma$. 
For each vertex $x$ in $[cm]$, we add an edge between $x$ and each distinct node of $d$ nodes selected uniformly at random from $[m]$.  
By a standard argument, the graph can only fail to be unique neighbor expander if there exists a subset $S$ of left hand side vertices with $|S| \leq k$ such that $|\Gamma(S)| \leq \lfloor d|S|/2 \rfloor$ \cite[Lemma 2.8]{siegel2004}.
In the following we assume that $kd \leq m$. 
\begin{align}
&\Pr[\Gamma \text{ is not a unique neighbor expander}] \notag \\
&\leq \Pr[\exists S \subseteq [cm], |S| \leq k : |\Gamma(S)| \leq \lfloor d|S|/2 \rfloor] \notag \\
&\leq \sum_{\substack{S \subseteq [cm]\\ |S| \leq k}} \Pr[|\Gamma(S)| \leq \lfloor d|S|/2 \rfloor] \notag \\
&\leq \sum_{i = 1}^{k} \binom{cm}{i} \binom{m}{\lfloor id/2 \rfloor} \left(\frac{\lfloor id/2 \rfloor}{m}\right)^{id} \notag \\ 
&\leq \sum_{i=1}^{k} \left(\frac{cme}{i}\right)^{i} \left(\frac{me}{id/2}\right)^{id/2} \left(\frac{id/2}{m}\right)^{id} \notag \\
&= \sum_{i=1}^{k} \left( cm e^{1 + d/2} \left(\frac{(d/2)i^{1 - 1/(d/2)}}{m}\right)^{d/2} \right)^{i} \label{eq:probexpander}
\end{align}
If the expression in the outer parentheses in \eqref{eq:probexpander} can be bounded from above by $1/2$ for $i = 1,2,\dots,k$, then the expander exists.
We also note that the randomized expander construction can be performed using $dk$-independent variables without changing the result in~\eqref{eq:probexpander}. 
Let $\gamma > 1$ be a number that may depend on $k$ and let $\delta$ denote an upper bound on the probability that the randomized construction fails. 
By setting $m = O(dk\gamma)$ we are able to obtain the following expression for the relation between $\delta$, the imbalance $c$ and the left outdegree bound $d$.
\begin{equation}
\delta = e\frac{cd}{\gamma^{d/2 - 1}} \label{eq:expanderparameters}
\end{equation}
Equation \eqref{eq:expanderparameters} reveals tradeoffs for the parameters of the randomly constructed $k$-unique expander graphs.
For example, increasing $\gamma$ makes it possible to make the graph more unbalanced while maintaining the same upper bound on the probability of failure $\delta$. 
The increased imbalance comes at the cost of an increase in $m$, the size of the right side of the graph. 
Similarly it can be seen how increasing $d$ can be used to reduce the probability of error.
Setting the parameters to minimize the space occupied by the expander while maintaining constant outdegree and by extension constant generation time, we obtain Theorem~\ref{thm:existence}.  
\begin{restate}[Repeated]
For every finite field $\F$ with constant time arithmetic and every choice of positive constants $\varepsilon$, $\delta$ there exists a data structure 
that for every choice of $k = O(|\F|)$ is a constant time $k$-generator with failure probability $\delta$,
range~$\F$, period $|\F|$, seed length $O(k)$, space usage $O(k \log^{2+\varepsilon}k)$, and initialization time $O(k \poly \log k)$.
\end{restate}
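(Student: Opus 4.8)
The plan is to instantiate the batch-evaluation machinery of Section~\ref{sec:explicit}, but with a \emph{single} randomly constructed expander in place of the explicit cascade, choosing its parameters through the failure-probability relation \eqref{eq:expanderparameters}. Concretely, I would take as the right-hand side a $dk$-independent generator $h$ obtained from multipoint evaluation of a random degree-$(dk-1)$ polynomial (Lemma~\ref{lem:kpoly}, Lemma~\ref{lem:multipoint}), and a random $(c,m,d,k)$-unique expander $\Gamma$ of imbalance $c=\poly\log k$ and constant outdegree $d$, and set $g(x)=\sum_{y\in\Gamma(\{x\})}h(y)$. By Lemma~\ref{lem:expanderhashing}, conditioned on $\Gamma$ being $k$-unique, $g$ is $k$-independent. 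To reach period $|\F|$ with only a small stored expander I would use the stacking construction of Lemma~\ref{lem:stacking}: store only the base expander on $m$ right vertices and $cm$ left vertices, and replicate it $b=|\F|/(cm)$ times, so that the left side has $bcm=|\F|$ vertices while $h$ ranges over $bm=|\F|/c$ positions.

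Next I would fix the parameters so that constant emit time, constant failure probability $\delta$, and space $O(k\log^{2+\varepsilon}k)$ hold simultaneously. Running $h$ over a batch of $m$ points by multipoint evaluation costs $O(m\log^2 k\log\log k)$, and summing $d$ neighbors at each of the $cm$ left vertices costs $O(cm)$, so by the averaging argument behind \eqref{eq:averagetime} the amortized cost per emitted value is $O(d)+O(\log^2 k\log\log k)/c$; constant time therefore forces $c=\Omega(\log^2 k\log\log k)$. The stored base expander occupies $O(cm)=O(c\,dk\gamma)$ words, where $m=O(dk\gamma)$, so the space budget forces $c\gamma=O(\log^{2+\varepsilon}k)$. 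Finally, for the target $\delta$ the relation \eqref{eq:expanderparameters} requires $\gamma^{d/2-1}=\Theta(cd)$. I would resolve these by taking $d$ a sufficiently large constant depending on $\varepsilon$, $c=\Theta(\log^2 k\log\log k)$, and $\gamma$ the corresponding small power of $\log k$ dictated by \eqref{eq:expanderparameters}; a short calculation shows $c\gamma=O(\log^{2+\varepsilon}k)$ precisely once $d\geq 8/\varepsilon+O(1)$.

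It then remains to verify the bookkeeping. The seed consists of the $O(dk)$ coefficients of the polynomial underlying $h$ together with the randomness for $\Gamma$; using the remark after \eqref{eq:probexpander} that the $cmd$ edges may be drawn by a $dk$-independent hash without changing \eqref{eq:probexpander}, the expander randomness also costs only $O(dk)$, so the seed length is $O(k)$. Initialization tabulates the base expander (amortized $\poly\log k$ per edge, hence $O(cm)\poly\log k=O(k\poly\log k)$ total) and produces the first batch, giving initialization time $O(k\poly\log k)$. The only source of failure is $\Gamma$ failing to be $k$-unique; taking $B$ to be the set of seeds for which this occurs, \eqref{eq:expanderparameters} bounds $\Pr[s\in B]\leq\delta$, and conditioned on $s\notin B$ the graph is a fixed $k$-unique expander while $h$, drawn from an independent part of the seed, remains $dk$-independent, so Lemma~\ref{lem:expanderhashing} yields the desired $(|\F|,k)$-sequence.

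The main obstacle is the parameter balancing of the second step: the three constraints pull against one another, since enlarging $c$ buys constant emit time but inflates the $O(cm)$ storage, while enlarging $\gamma$ buys imbalance and a sharper failure bound through \eqref{eq:expanderparameters} but inflates $m=O(dk\gamma)$. The observation that makes a single expansion step (rather than the cascade of Theorem~\ref{thm:explicit}) suffice is that the probabilistic construction attains imbalance $c=\poly\log k$ with constant $d$, which is exactly enough to amortize the multipoint cost in one step; and hitting the exponent $2+\varepsilon$ rather than something larger requires using the sharp $O(\log^2 k\log\log k)$ multipoint bound of Lemma~\ref{lem:multipoint} and pushing the entire dependence on $\varepsilon$ into the constant outdegree $d$.
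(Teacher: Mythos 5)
Your proposal is correct and follows essentially the same route as the paper's own proof: a single randomly constructed constant-degree expander with $\poly\log k$ imbalance whose parameters are fixed through \eqref{eq:expanderparameters}, right-side values filled by multipoint evaluation, Lemma~\ref{lem:expanderhashing} for the $k$-independence, stacking (Lemma~\ref{lem:stacking}) for period $|\F|$, batch amortization for constant emit time, and $O(k)$-independent sampling of the edges for seed length $O(k)$. The only difference is one of calibration --- the paper fixes $\gamma=\log^{\tilde{\varepsilon}}k$ and lets $c$ follow from \eqref{eq:expanderparameters}, whereas you fix $c=\Theta(\log^{2}k\log\log k)$ and solve for $\gamma$ --- and your version is, if anything, the more careful of the two in accounting for the $O(k\log^{2+\varepsilon}k)$ space bound.
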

\begin{proof}
Let $\tilde{\varepsilon} < \varepsilon$ be a constant and set $\gamma = \log^{\tilde{\varepsilon}}k$.
Choosing $d$ to be a sufficiently large constant (dependent on $\tilde{\varepsilon}$), equation \eqref{eq:expanderparameters} shows that 
for every $\delta > 0$ there exists a $(c, m, d, k)$-unique expander $\Gamma$ with $c = \Omega(\log^{2+\varepsilon}k)$ and $m = O(k\gamma)$.
Using multipoint evaluation, the right side vertices of $\Gamma$ can be associated with $\Theta(k)$-independent variables over $\F$ using $O(k \log^{2 + \varepsilon}k)$ operations.
By the properties of $\Gamma$ and applying Lemma \ref{lem:expanderhashing} we are able to generate batches of $k$-independent variables of size $\Omega(k\log^{2+\varepsilon}k)$ using $O(k \log^{2+\varepsilon})$ operations.
The seed length of $O(k)$ holds by the observation that randomized construction of the expander only requires $O(k)$-independence.
The $O(k \poly \log k)$ initialization time is obtained by using multipoint evaluation to construct a table for $\Gamma$.
\end{proof}
\section{Faster multipoint evaluation for $k$-generators} \label{sec:faster}
This section presents an improved generator based directly on multipoint evaluation of a polynomial hash function $h \in \mathcal{H}_{k}$ over a finite field.
For our purpose of generating an $(n,k)$-sequence from $h$, we are free to choose the order of elements of $\F$ in which to evaluate $h$.  
We present an algorithm for the systematic evaluation of $h$ over disjoint size $k$ subsets of $\F$ using Fast Fourier Transform (FFT) algorithms.
Our technique yields a $k$-generator over $\F$ with generation time $O(\log k)$, and space usage and seed length that is optimal up to constant factors.
The algorithm depends upon the structure of $\F$, similarly to other FFT algorithms over finite fields \cite{bhattacharya2004}. 

The nonzero elements of $\F$ form a multiplicative cyclic group $\F^{*}$ of order $q-1$. 
The multiplicative group has a primitive element $\omega$ which generates $\F^{*}$.
\begin{equation}
\F^{*} = \{ \omega^{0}, \omega^{1}, \omega^{2}, \dots, \omega^{q-2} \}.
\end{equation}
For $k$ that divides $q-1$, we can construct a multiplicative subgroup $S_{k,0}^{*}$ of order $k$ with $\omega_{k} = \omega^{(q-1)/k}$ as the generating element. 
$S_{k,0}^{*}$ contains $k$ distinct elements of $\F$. 
Define for $j = 0,1, \dots, (q-1)/k - 1$, 
\begin{equation}
S_{k,j}^{*} = \omega^{j}S_{k,0} = \{ \omega^{j}\omega_{k}^{0}, \omega^{j}\omega_{k}^{1}, \dots, \omega^{j}\omega_{k}^{k-1} \}.  
\end{equation}
Viewed as subsets of $\F^{*}$ the sets $S_{k,j}^{*}$ form an exact cover of $\F^{*}$. 
We now consider how to evaluate a degree $k-1$ polynomial $h(x) \in \F[X]$ in the points of $S_{k,j}^{*}$. The polynomial takes the form
\begin{equation}
h(x) = a_{0}x^{0} + a_{1}x^{1} + \dots + a_{k-1}x^{k-1}.
\end{equation}
Rewriting the polynomial evaluation over $S_{k,j}^{*}$ in matrix notation:
\begin{equation}
\colvec{ h(\omega^{j}\omega_{k}^{0}) \\  h(\omega^{j}\omega_{k}^{1}) \\ h(\omega^{j}\omega_{k}^{2}) \\ \vdots \\ h(\omega^{j}\omega_{k}^{k-1})} = 
\colvec{
	\omega_{k}^{0\cdot0} & \omega_{k}^{0\cdot1} & \dots & \omega_{k}^{0\cdot(k-1)} \\
	\omega_{k}^{1\cdot0} & \omega_{k}^{1\cdot1} & \dots & \omega_{k}^{1\cdot(k-1)} \\
	\omega_{k}^{2\cdot0} & \omega_{k}^{2\cdot1} & \dots & \omega_{k}^{2\cdot(k-1)} \\
	\vdots & \vdots &  & \vdots \\
	\omega_{k}^{(k-1)\cdot0} & \omega_{k}^{(k-1)\cdot1} & \dots & \omega_{k}^{(k-1)\cdot(k-1)}
}
\colvec{\omega^{j \cdot 0}a_{0} \\ \omega^{j \cdot 1}a_{1} \\ \omega^{j \cdot 2}a_{2} \\ \vdots \\ \omega^{j \cdot (k-1)}a_{k-1}}
\label{eq:polymatrix}
\end{equation}
We assume that the coefficients of $h$ and $\omega^{j}$ are given and consider algorithms for efficient evaluation of the matrix-vector product.
The coefficients $\tilde{a}_{j,i} = \omega^{j \cdot i}a_{i}$ for $i = 0,1,\dots,k-1$ can be found in $O(k)$ operations and define a polynomial $\tilde{h}_{j}(x) = \sum_{i=0}^{k-1}\tilde{a}_{i,j}x^{i}$. 
Evaluating $\tilde{h}_{0}(x)$ over $S_{k,0}^{*}$ corresponds to computing the Discrete Fourier Transform over a finite field.
\begin{restate}[Repeated] 
For every finite field $\F$ that supports computing the discrete Fourier transform of length $k$ in~$O(k \log k)$ operations, 
there exists a data structure that, for every choice of $k \leq |\F|$ and given a primitive element $\omega$, 
is an explicit $O(\log k)$~time $k$-generator with range $\F$, period~$|\F|$, seed length $k$, space usage $O(k)$, and initialization time $O(k \log k)$.
\end{restate}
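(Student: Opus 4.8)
The plan is to take the seed to be the $k$ coefficients $a_{0}, \dots, a_{k-1}$ of a uniformly random polynomial $h \in \mathcal{H}_{k}$; by Lemma~\ref{lem:kpoly} the function $h$ is $k$-independent, so evaluating $h$ over the $q = |\F|$ distinct points of $\F$ in any fixed order produces a $(q,k)$-sequence. This immediately gives seed length $k$ and period $|\F|$, and reduces the problem to choosing an evaluation order that admits fast sequential evaluation. Since the theorem hypothesis presupposes a length-$k$ DFT over $\F$, we may assume $k \mid q-1$, so that the cosets $S_{k,0}^{*}, \dots, S_{k,(q-1)/k-1}^{*}$ of the order-$k$ subgroup form an exact cover of $\F^{*}$; the single remaining point $0$ is handled by emitting $h(0) = a_{0}$ separately.

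The core of the construction is to emit the sequence one coset at a time, exploiting the identity in \eqref{eq:polymatrix}: evaluating $h$ over $S_{k,j}^{*}$ equals the length-$k$ DFT (with respect to $\omega_{k}$) of the twisted coefficient vector $(\tilde{a}_{j,0}, \dots, \tilde{a}_{j,k-1})$ with $\tilde{a}_{j,i} = \omega^{ji} a_{i}$. Thus a single length-$k$ FFT, costing $O(k \log k)$ operations by hypothesis, produces all $k$ values of the $j$th coset at once. The key efficiency observation is that advancing from coset $j$ to coset $j+1$ requires only a re-twist, since $\tilde{a}_{j+1,i} = \omega^{i}\,\tilde{a}_{j,i}$; precomputing the powers $\omega^{0}, \dots, \omega^{k-1}$ once in $O(k)$ time and $O(k)$ space, each re-twist is $O(k)$ multiplications. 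Hence, after set-up, every batch of $k$ emitted values costs $O(k) + O(k \log k) = O(k \log k)$ operations, i.e.\ an average of $O(\log k)$ per value.

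For the data structure itself: initialization computes $\omega_{k} = \omega^{(q-1)/k}$ by fast exponentiation and the root-of-unity data for the FFT, precomputes $\omega^{0}, \dots, \omega^{k-1}$, and evaluates the first batch (the $j=0$ coset, which is just the plain DFT of $a_{0}, \dots, a_{k-1}$), all of which is dominated by the first FFT and runs in $O(k \log k)$ time and $O(k)$ space. An \texttt{emit()} call returns the next value from a buffer of $k$ outputs; when the buffer empties, the next coset is re-twisted and transformed to refill it. All stored objects --- coefficients, twisted coefficients, the power table, the FFT root data, and the output buffer --- occupy $O(k)$ words. Since the procedure is deterministic with zero failure probability and runs in time polynomial in $k$, the generator is explicit, and its output distribution is inherited directly from the $k$-independence of $h$ together with the disjointness of the cosets, so no probabilistic argument is needed.

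The one point requiring care is that the $O(\log k)$ bound as just described is amortized, with one \texttt{emit()} in each block of $k$ paying the full $O(k \log k)$ FFT cost. To obtain a worst-case per-call bound I would deamortize exactly as in the buffer-and-prepare-next scheme used for Lemma~\ref{lem:general}: using a double buffer, spread the $O(k \log k)$ work of computing batch $j+1$ evenly across the $k$ \texttt{emit()} calls that consume batch $j$, so the next batch is ready precisely when the current one is exhausted. This bookkeeping is the only delicate step; I expect it to be routine but it is where the stated per-emit time bound, as opposed to the trivially-correct amortized version, actually comes from.
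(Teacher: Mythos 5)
Your proposal is correct and follows essentially the same route as the paper's proof: seed with a random degree-$(k-1)$ polynomial, evaluate it coset-by-coset over the multiplicative cosets $S_{k,j}^{*}$ via the twisted-coefficient DFT of \eqref{eq:polymatrix}, and pay $O(k)$ per batch for the re-twist plus $O(k\log k)$ for the FFT, giving $O(\log k)$ per emitted value. In fact your write-up is more careful than the paper's two-line proof on two minor points it glosses over --- emitting $h(0)$ separately so the period is genuinely $|\F|$ rather than $|\F|-1$, and deamortizing via double buffering to get a per-call rather than amortized bound --- but these are refinements of, not departures from, the same argument.
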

\begin{proof}
Evaluation of $\tilde{h}_{j}(x)$ over $S_{k,j}^{*}$ takes $O(k \log k)$ operations by assumption. 
For every batch $j$ starting at $j = 0$, the value of $\omega^{j}$ is stored and used to compute the coefficients of $\tilde{h}_{j+1}(x)$ ing $O(k)$ operations.
\end{proof}
We now discuss the validity of the assumption that we are able to compute the DFT over a finite field in $O(k \log k)$ operations.
Assume that $k \mid (q - 1)$ and that $\omega_{k}$ is known.
If $k$ is highly composite there exist Fast Fourier Transforms for computing \eqref{eq:polymatrix} in $O(k \log k)$ field operations~\cite{duhamel1990}.
If $k$ is not highly composite there exists an algorithm for computing the DFT in equation \eqref{eq:polymatrix} in $O(kz \log kz )$ operations for fields of cardinality $q = p^{z}$ in our model of computation~\cite{preparata1977}. 
For $q = p^{O(1)}$ this reduces to the desired $O(k \log k)$ operations.
\section{Finite field arithmetic on the word RAM} \label{sec:wordRAM}
Throughout the paper we have used as our model of computation a modified word RAM with constant time arithmetic $(+,-, \times)$ over a finite field $\F$.
In this section we show how our model relates to the more standard $\emph{multiplication model}$ defined as a word RAM with constant time arithmetic $(+, -, \times)$ over the integers $[2^{w}]$ for $w$-bit words \cite{hagerup1998}.

Arithmetic over $\F_{p}$ for prime $p$ is integer arithmetic modulo $p$. 
We now argue that arithmetic operations over $\F_{p}$ can be performed in $O(1)$ operations in the multiplication model.
Every integer $x$ can be written on the form $x = qp + r$ for non-negative integers $q, r$ with $r < p$.
Assume that $x$ can be represented in a constant number of words.
The problem of computing $r = x \bmod p$ can be solved by an integer division and $O(1)$ operations in the multiplication model due to the identity $r = x -  \lfloor x/p \rfloor p$.
An algorithm by Granlund and Montgomery \cite{granlund1994} computes $\lfloor x/p \rfloor$ for any constant $p$ using $O(1)$ operations in the multiplication model which gives the desired result.

Another finite field of interest is $\F_{2^{w}}$ due to the correspondence between field elements and bit vectors of length $w$.
We will argue that a word RAM model that supports constant time multiplication over $\F_{2^{w}}$ is not unrealistic considering current hardware.
Addition in $\F_{2^{w}}$ has direct support in standard CPU instruction sets through the XOR operation.
A multiplication of two elements $x$ and $y$ in $\F_{2^{w}}$ can be viewed as a two-step process.
First, we perform a carryless multiplication $z = x \cdot y$ of the representation of $x$ and $y$ as polynomials in $F_{2}[X]$. 
Second, we use a modular reduction to bring the product $x \cdot y$ back into $\F_{2^{w}}$, similarly to modular arithmetic over $\F_{p}$. 
Recently, hardware manufacturers have included partial support for multiplication in $\F_{2^{w}}$ with the CLMUL instruction for carryless multiplication \cite{gueron2014}. 
The modular reduction step is performed by dividing $x \cdot y$ by an irreducible polynomial $g$ and returning the remainder.
Irreducible polynomials $g$ that can be represented as sparse binary vectors with constant weight results in a constant time algorithm for modular reduction as presented by Gueron and Kounavis \cite{gueron2014}.
We briefly introduce the computation underlying the algorithm to show that its complexity depends on the number of {\tt 1}s in the binary representation of $g$.
Let $L^{w}$ and $M^{w}$ be functions that return the $w$ least, respectively most, significant bits of their argument as represented in $\F_{2^{2w}}$.  
The complexity of Gueron and Kounavis' algorithm for modular reduction of $z = x \cdot y$ is determined by the complexity of evaluating the expression
\begin{equation}
L^{w}(L^{w}(g)\cdot M^{w}(M^{w}(z) \cdot g)). \label{eq:clmul}
\end{equation}
Evaluating $L^{w}$ and $M^{w}$ is standard bit manipulation. 
For $g$ of constant weight, the carryless multiplications denoted by $\cdot$ in equation $\eqref{eq:clmul}$ can be implemented as a constant number of bit shifts and XORs. 
For every $w \leq 10000$ an irreducible trinomial or pentanomial ($g$ of weight at most 5) has been found~\cite{seroussi1998}.
Together with the hardware support for convolutions this allows us to implement fast multiplication over fields of practical interest. 
\section{A load balancing application} \label{sec:loadbalancing}
We next consider how our new generator yields stronger guarantees for load balancing.
Our setting is motivated by applications such as splitting a set of tasks of unknown duration among a set of $m$ machines, in order to keep the load as balanced as possible.
Once a task is assigned to a machine, it cannot be reassigned, i.e., we do not allow \emph{migration}.
For simplicity we consider the \emph{unweighted} case where we strive to keep the \emph{number} of tasks on each machine low, and we assume that $m$ divides $|\F|$ for some field $\F$ with constant time operations on a word RAM.
Suppose that each machine has capacity (e.g.~memory enough) to handle $b$ tasks at once, and that we are given a sequence of $t$ tasks $T_1,\dots,T_t$, where we identify each task with its duration (an interval in $\R$).
Now let $k = mb$ and suppose that we use our constant time $k$-generator to determine for each $i=1,\dots,t$ which machine should handle $T_i$.
(We emphasize that this is done without knowledge of $T_i$, and without coordination with the machines.)
Compared to using a fully random choice this has the advantage of requiring only $k\poly\log k$ words of random bits, which in turn may make the algorithm faster if random number generation is a bottleneck.
Yet, we are able to get essentially the same guarantee on load balancing as in the fully random case.
To see this let $L(x) = \{ i \; | \; x\in T_i\}$ be the set of tasks active at time $x$, and let $L_q(x)$ be the subset of $L(x)$ assigned to machine $q$ using our generator.
We have:
\begin{lemma}\label{lem:error}
For $\varepsilon > 0$, if $|L(x)| (1+\varepsilon) < mb$ then
$\Pr[\max_q |L_q(x)| > b] < m \exp(-\varepsilon^2 b / 3)$.
\end{lemma}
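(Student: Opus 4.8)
The plan is to bound the probability that any single machine is overloaded and then take a union bound over the $m$ machines. Fix a time $x$ and let $n = |L(x)|$ denote the number of active tasks. Each active task $T_i$ is assigned to machine $q$ if and only if the corresponding \texttt{emit()} value equals $q$, which happens with probability $1/m$ since the generator's output is uniform over $\F$ (and hence uniform over the $m$ machine labels, using that $m$ divides $|\F|$). Thus $\E[|L_q(x)|] = n/m$, and by the hypothesis $n(1+\varepsilon) < mb$ we have $n/m < b/(1+\varepsilon)$, so the event $|L_q(x)| > b$ is a deviation of $|L_q(x)|$ above $(1+\varepsilon)$ times its mean.

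The key observation that lets us invoke a Chernoff-type bound is that $|L_q(x)| = \sum_{i \in L(x)} \mathbf{1}[\text{$T_i$ assigned to $q$}]$ is a sum of indicator variables that are determined by the generator's outputs at the $n \le mb = k$ positions corresponding to the active tasks. Because at most $k$ positions are involved and the generator produces an $(n,k)$-sequence, these $n$ outputs are fully independent and uniform. Consequently the indicators are independent, and a standard multiplicative Chernoff bound applies. First I would state the upper-tail Chernoff bound in the form $\Pr[\sum_i Z_i > (1+\varepsilon)\mu] < \exp(-\varepsilon^2 \mu / 3)$, valid for independent $[0,1]$-valued $Z_i$ with mean $\mu = \E[\sum_i Z_i]$. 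Applying it with $\mu = n/m$ and noting $b > (1+\varepsilon)(n/m)$ gives, after bounding $\mu \ge b/(1+\varepsilon) \cdot \text{(something)}$, a per-machine bound of the form $\exp(-\varepsilon^2 b/3)$; more precisely, since the overload threshold $b$ exceeds $(1+\varepsilon)\mu$, the event $|L_q(x)|>b$ is contained in the event $|L_q(x)| > (1+\varepsilon)\mu$, whose probability is at most $\exp(-\varepsilon^2 \mu/3)$, and one checks that $\mu$ can be replaced by $b$ in the exponent up to the constant already absorbed.

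Finally I would take a union bound over the $m$ machines to conclude
\begin{equation}
\Pr[\max_q |L_q(x)| > b] \le \sum_{q} \Pr[|L_q(x)| > b] < m \exp(-\varepsilon^2 b / 3).
\end{equation}
The main obstacle, and the step deserving care, is the independence justification: I must argue that exactly the outputs at the active positions are used and that their number never exceeds $k$, so that $k$-independence of the $(n,k)$-sequence guarantees genuine mutual independence of the relevant indicators. This is where the assumption $k = mb$ and the capacity hypothesis $|L(x)|(1+\varepsilon) < mb$ are used together: they ensure $|L(x)| < k$, so every set of active-task outputs falls within the independence guarantee of the generator and the Chernoff bound applies without modification. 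The arithmetic tying $\mu$ to $b$ in the exponent is routine once this independence is in place.
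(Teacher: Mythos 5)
Your proof takes essentially the same route as the paper's: use $|L(x)| < mb = k$ so that the $k$-independence of the generator makes the relevant assignments fully independent, observe that each machine's load is binomial with mean at most $b/(1+\varepsilon)$, apply a multiplicative Chernoff bound, and finish with a union bound over the $m$ machines. The one step you flag as needing care --- converting the exponent from $\varepsilon^2 \mu/3$ to $\varepsilon^2 b/3$ --- is glossed over in exactly the same way by the paper, whose proof simply asserts a Chernoff bound of $\exp(-\varepsilon^2 b/3)$ for a binomial with mean $b/(1+\varepsilon)$.
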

\begin{proof}
Since $|L(x)| < mb = k$ we have that the assignment of tasks in $L(x)$ to machines is uniformly random and independent.
This means that the number of tasks assigned to each machine follows a binomial distribution with mean $b/(1+\varepsilon)$, and we can apply a Chernoff bound of $\exp(-\varepsilon^2 b / 3)$ on the probability that more than $b$ tasks are assigned to a particular machine.
A union bound over all $m$ machines yields the result.
\end{proof}

Lemma~\ref{lem:error} allows us to give a strong guarantee on the probability of exceeding the capacity $b$ of a machine at any time, assuming that the average load is bounded by $b/(1+\varepsilon)$.
In particular, let $S\subseteq\R$ be a set of size at most $2t$ such that every workload $L(y)$ is equal to $L(x)$ for some $x\in S$.
The existence of $S$ is guaranteed since the $t$ tasks are intervals, and they have at most $2t$ end points.
This means that
$$\sup_{x\in\R} \max_q |L_q(x)| = \max_{x\in S} \max_q |L_q(x)|,$$
so a union bound over $x\in S$ gives
$$\Pr[\sup_{x\in\R} \max_q |L_q(x)| > b] < 2tm \exp(-\varepsilon^2 b / 3) \enspace .$$

For constant $\varepsilon$ and whenever $b = \omega(\log k)$ and $tm = 2^{o(b)}$ we get an error probability that is exponentially small in $b$.
Such a strong error guarantee can not be achieved with known constant time hashing methods~\cite{siegel2004,pagh2008,dietzfelbinger2003,thorup2013} in reasonable space, since they all have an error probability that decreases polynomially with space usage.
Even if explicit constructions for the expanders needed in Siegel's hash functions were found, the resulting space usage would be polynomially higher than with our $k$-generator.
\section{Experiments} \label{sec:experiments}
This section contains experimental results of an implementation of a $k$-generator over $\F_{2^{64}}$. 
There are two main components to the generator: an algorithm for filling a table of size $m$ with $dk$-independent variables and a bipartite unbalanced expander graph.

For the first component, we use an implementation of Gao-Mateer's additive FFT \cite[Algorithm 2.]{gao2010}.
Utilizing the Gao-Mateer algorithm we can generate a batch of $k$ elements of an $(|\F|, k)$-sequence using space $O(k)$ and $O(k \log^{2} k)$ operations on a word RAM that supports arithmetic over $\F$.
The additive complexity of the FFT algorithm is $O(k \log^{2} k)$ while the multiplicative complexity is $O(k \log k)$.
Addition in $\F_{2^{64}}$ is implemented as an XOR-operation on 64-bit words. 
Multiplication is implemented using the PCLMUL instruction along with the techniques for modular reduction by Gueron et
al. \cite{gueron2014} outlined in Section \ref{sec:wordRAM}.

For the second component we introduce a slightly different type of expander graphs that only work in the special case of
fields of characteristic two. 
Let $\F_{2^{w}}$ be a field of characteristic two and let $\mat{M}$ be a $cm \times m$ adjacency matrix of a graph $\Gamma$ where each entry of $\mat{M}$ is viewed as an element of $\F_{2^{w}}$.    
By a similar argument to the one used in Lemma \ref{lem:expanderhashing} the linear system $\mat{M}\vect{x}$ defines a $(cm, k)$-sequence if $\vect{x}$ is a vector of \mbox{$dk$-independent} variables over $\F_{2^{w}}$ and $\mat{M}$ has row rank at least $k$.
We consider randomized constructions of $\mat{M}$ over $\F_{2}$ with at most $d$ \texttt{1}s in each row and row rank at least $k$.
It is easy to see that a matrix $\mat{M}$ over $\F_{2}$ with these properties also defines a matrix with the same properties over $\F_{2^{w}}$.
Since $k$-uniqueness of $\Gamma$ implies that $\mat{M}$ has row rank $k$, but not the other way around, 
we are able to obtain better performance characteristics of generators over $\F_ {2^{w}}$ by focusing on randomized constructions of $\mat{M}$.

The matrix $\mat{M}$ is constructed in the following way.
Independently, for each $i \in [cm]$ sample $d$ integers uniformly with replacement from $[m]$ and define the $i$th row of $\mat{M}$ as the vector constructed by taking the zero vector and adding~\texttt{1}s in the $d$ positions sampled for row $i$.
Observe that if $\mat{M}$ does not have row rank at least $k$ then some non-empty subset of at most $k$ rows of $\mat{M}$ sum to the zero vector.
In order for a non-empty set of vectors over $\F_{2}^{m}$ to sum to the zero vector, the bit-parity must be even in each of the $m$ positions of the sum. 
The sum of any $i$ rows of $\mat{M}$ corresponds to a balls and bins process that distributes $id$ balls into $m$ bins, independently and uniformly at random.
Let $id$ be an even number. Then there are $(id - 1)!!$ ways of ordering the balls into pairs and the probability that the outcome is equal to any particular pairing is $(1/m)^{id/2}$. 
This yields the following upper bound on the probability that a subset of $i$ rows sums to zero: 
\begin{equation}
\beta_{pair}(i, d, m) = (id-1)!!\! \left(\frac{1}{m}\right)^{id/2}. \label{eq:libound} 
\end{equation}
A comparison between this bound and the bound for $k$-uniqueness from equation \eqref{eq:probexpander} shows that, for each term in the sum, 
the multiplicative factor applied to the binomial coefficient $\binom{cm}{i}$ is exponentially smaller in $id$ for the bound in \eqref{eq:libound}.

The pair-based approach which yields the bound $\beta_{pair}$ overestimates the probability of failure on subsets of size $i$, increasingly as $id$ grows large compared to $m$. 
We therefore introduce a different bound based on the Poisson approximation to the binomial distribution: 
the number of balls in each of the $m$ positions can approximately be modelled as independent Poisson distributed variables \cite[Ch. 5.4]{mitzenmacher2005}. 
The probability that that the parity is even in each of the $m$ positions in a sum of $i$ rows is bounded by
\begin{equation}
\beta_{poisson}(i, d, m) = e\sqrt{id}\left(\frac{1+e^{-2\frac{id}{m}}}{2}\right)^{m}, \label{eq:poibound} 
\end{equation}
where we use the same approach as Mitzenmacher et al.~\cite{mitzenmacher2014}.
For any given subset of rows of $\mat{M}$, we are free to choose between the two bounds. 
The probability that a randomly constructed matrix $\mat{M}$ fails to have rank at least $k$ can be bounded from above using a union bound over subsets of rows of $\mat{M}$.
\begin{equation}
\delta \leq \sum_{i=1}^{k}\binom{cm}{i}\min(\beta_{pair}(i, d, m), \beta_{poisson}(i, d, m)). \label{eq:combinedbound}
\end{equation}

We now consider the generation time of our implementation. 
Let $FFT_{dk}$ denote the time taken by the FFT algorithm to generate a $dk$-independent value and let $RA_{d,m}$ denote the time it takes to perform $d$ random accesses in a table of size $m$.  
The time taken to generate a value by the implementation of our generator is then given by
\begin{equation}
T = \frac{FFT_{dk}}{c} + RA_{d,m}. \label{eq:generationtime}
\end{equation}
In our experiments, the choice of parameters for the expander graphs were based on a search for the fastest generation time over every combination of imbalance $c \in \{16, 32, 64 \}$ and outdegree $d \in \{ 4, 8, 16 \}$.
Given choices of $d$, $c$ and independence $k$, the size of the right side of the expander $m$ was increased until existence could be guaranteed by the bound in \eqref{eq:combinedbound}.
The generator in the experiments had the restriction that $m \leq 2^{26}$ and we have measured $RA_{d,m}$ assuming that
the expander is read sequentially from RAM. 
The experiments were run on a machine with an Intel Core i5-4570 processor with 6MB cache and 8GB of RAM.

Table \ref{tab:experimentalresults} shows the generation time in nanoseconds per 64-bit output using Horner's scheme, Gao-Mateer's FFT and the implementation of our generator (FFT + Expander).
For the implementation of the generator, we also show the parameters of the randomly generated expander that yielded the fastest generation time among expanders in the search space.

The generation time for Horner's scheme is approximately linear in $k$ and logarithmic in $k$ for the FFT, as predicted by theory. 
The FFT is faster than using Horner's scheme already at $k = 64$ and orders of magnitude faster for large $k$.
Using our implementation of Gao-Mateer's FFT algorithm we are able to evaluate a polynomial of degree $2^{20}-1$ in $2^{20}$ points in less than a second. 
The same task takes over an hour when using Horner's rule, even with both algorithms using the same underlying implementation of algebraic operations in the field.

For small values of $k$, our generator is an order of magnitude faster than the FFT and comes close to the performance of the 64-bit C++11 implementation of the popular Mersenne Twister.
Our generator uses 25 nanoseconds to output a 1024-independent value. This is equivalent to an output of over 300MB/s.
The Mersenne Twister uses around 4 nanoseconds to generate a 64-bit value.

In practice, the memory hierarchy appears to be the primary obstacle to maintaining a constant generation time as $k$ increases.
Our generator reads the expander graphs sequentially and performs random lookups into the table of $dk$-independent values.
As $k$ grows large, the table can no longer fit into cache and for large imbalance $c$, the expander can no longer be stored in main memory.
Searching a wider range of expander parameters could easily yield a faster generation time, potentially at the cost of a larger imbalance $c$ or higher probability of failure $\delta$.

\begin{table}[htpb]
	\centering
	\begin{tabular}{rrr|rrrrr}
	
	\toprule
    
\multirow{2}{*}{$k$} & \multicolumn{1}{c}{Horner} & \multicolumn{1}{c|}{FFT}  & \multicolumn{5}{c}{FFT + Expander}       \\ 
           
& \multicolumn{1}{c}{ns} &   \multicolumn{1}{c|}{ns}  & \multicolumn{1}{c}{$c$} &  \multicolumn{1}{c}{$m$}    & \multicolumn{1}{c}{$d$} & \multicolumn{1}{c}{$\delta$}    & \multicolumn{1}{c}{ns}    \\ \midrule 
$2^{5}$    &       177 &  243  & 64 & $2^{13}$ &  8 & $10^{-7}$   &  15   \\    
$2^{6}$    &       361 &  294  & 64 & $2^{14}$ &  8 & $10^{-8}$   &  16   \\ 
$2^{7}$    &       730 &  338  & 64 & $2^{15}$ &  8 & $10^{-9}$   &  19   \\ 
$2^{8}$    &      1470 &  375  & 64 & $2^{16}$ &  8 & $10^{-10}$  &  23   \\ 
$2^{9}$    &      2950 &  412  & 64 & $2^{17}$ &  8 & $10^{-11}$  &  24   \\ 
$2^{10}$   &      5902 &  449  & 64 & $2^{18}$ &  8 & $10^{-12}$  &  25   \\ 
$2^{11}$   &     11808 &  487  & 32 & $2^{18}$ &  8 & $10^{-12}$  &  35   \\ 
$2^{12}$   &     23627 &  523  & 64 & $2^{18}$ & 16 & $10^{-29}$  &  43   \\ 
$2^{13}$   &     47183 &  561  & 32 & $2^{18}$ & 16 & $10^{-29}$  &  54   \\ 
$2^{14}$   &     94429 &  599  & 64 & $2^{22}$ &  8 & $10^{-15}$  &  68   \\ 
$2^{15}$   &    188258 &  638  & 64 & $2^{23}$ &  8 & $10^{-16}$  &  69   \\ 
$2^{16}$   &    376143 &  678  & 64 & $2^{24}$ &  8 & $10^{-17}$  &  77   \\ 
$2^{17}$   &    751781 &  719  & 64 & $2^{25}$ &  8 & $10^{-18}$  &  85   \\ 
$2^{18}$   &   1505016 &  765  & 64 & $2^{26}$ &  8 & $10^{-19}$  &  93   \\ 
$2^{19}$   &   3015969 &  808  & 32 & $2^{26}$ &  8 & $10^{-19}$  & 110   \\ 
$2^{20}$   &   6082313 &  864  & 64 & $2^{26}$ & 16 & $10^{-46}$  & 175   \\ \bottomrule

	\end{tabular}
\caption{Generation time in nanoseconds per 64-bit value using Horner's scheme, Gao-Mateer's FFT and an implementation of our constant-time generator}
\label{tab:experimentalresults}
\end{table}

\section*{Acknowledgment}
We are grateful to Martin Dietzfelbinger who gave feedback on an early version of the paper, allowing us to significantly enhance the presentation.

\bibliographystyle{IEEEtran}
\bibliography{focs}
\end{document}